\newcommand{\Comment}[1]{\relax}
\newcommand{\Hide}[1]{}
\newif
\newif
\newtheorem{theorem}{Theorem}
\numberwithin{theorem}{section}
\newtheorem{lemma}[theorem]{Lemma}
\newtheorem{corollary}[theorem]{Corollary}
\newtheorem{propxxx}[theorem]{Property}
\newtheorem{restxxx}[theorem]{Restriction}
\newtheorem{agreexxx}[theorem]{Agreement}
\newtheorem{termxxx}[theorem]{Terminology}
\newtheorem{notxxx}[theorem]{Notation}
\newtheorem{claimxxx}[theorem]{Claim}
\newtheorem{assumxxx}[theorem]{Assumption}
\newtheorem{exaxxx}[theorem]{Example}
\newtheorem{remxxx}[theorem]{Remark}
\newtheorem{openxxx}[theorem]{Open Problem}
\newtheorem{conjxxx}[theorem]{Conjecture}
\newtheorem{defxxx}[theorem]{Definition}
\newenvironment{definition}[1]{\begin{defxxx}[\emph{#1}]\rm}%
{\hfill\QED\end{defxxx}}
\newtheorem{procxxx}[theorem]{Procedure}
{\hfill\QED\end{procxxx}}
\newtheorem{Prxxx}[theorem]{Proof}
{\end{Prxxx}} 
\newcommand{\argmin}{\arg\!\min}
\newcommand{\argmax}{\arg\!\max}
\newcommand{\Set}[1]{\{ #1 \}}
\newcommand{\G}{{\cal G}}
\newcommand{\B}{{\cal B}}
\newcommand{\Let}[3]%
    {\textbf{\textsf{let}}\ {#1}\,{#2}\ \textbf{\textsf{in}}\;{#3}\,}
\newcommand{\Try}[3]%
    {\textbf{\textsf{try}}\ {#1} {#2}\ \textbf{\textsf{in}}\;{#3}\;}
\newcommand{\Mix}[3]%
    {\textbf{\textsf{mix}}\ {#1} {#2}\ \textbf{\textsf{in}}\;{#3}\;}
\newcommand{\LET}[3]%
    {\textbf{\textsf{let}}^*\ {#1} {#2}\ \textbf{\textsf{in}}\;{#3}\;}
\newcommand{\Letrec}[3]%
    {\textbf{\textsf{letrec}}\ {#1} {#2}\ \textbf{\textsf{in}}\;{#3}\;}
\newcommand{\QED}{{\Large $\square$}} 
\newcommand{\spacing}[2]{
  \renewcommand{\baselinestretch}{#2}
  \small\normalsize #1
  \setlength{\parskip}{0.1\baselineskip}
  \settowidth{\parindent}{xxxx}
  \setlength{\parindent}{#2\parindent}
  \setlength{\leftmargini}{\parindent}
  \setlength{\leftmarginii}{\parindent}
  \setlength{\leftmarginiii}{\parindent}
  \setlength{\footnotesep}{#2\footnotesep}
}
\begin{document}

\spacing{\normalsize}{0.98}
\setcounter{page}{1}     
\setcounter{tocdepth}{1} 
\ifTR
  \pagenumbering{roman} 
\else
\fi
\title{Average Delay of Routing Trees}

\author{ Saber Mirzaei \\
       Boston University  \\
        \ifTR Boston, Massachusetts \\
        \href{mailto:smirzaei@bu.edu}{smirzaei{@}bu.edu}
        \else \fi
}

   \date{\today}

\maketitle
  \ifTR
     \thispagestyle{empty} 
  \else
  \fi

\vspace{-.3in}
  \begin{abstract}
  The general communication tree embedding problem is the problem of mapping a set of communicating terminals,
represented by a graph $G$, into the set
of vertices of some physical network represented by a tree $T$.
In the case where
the vertices of G are mapped into the leaves of the host tree $T$ 
the underlying tree is called a \emph{routing tree}
and
if the internal vertices of $T$ are forced to have degree $3$, 
the host tree is known as \emph{layout tree}.
Different optimization problems have been studied in the class of
communication tree problems such as well-known minimum edge \emph{dilation}
and minimum edge \emph{congestion} problems.
In this report we study the less investigate measure \emph{i.e.}
\emph{tree length}, which is a representative for
average edge dilation (communication delay) measure and also for average
edge congestion measure. 
We show that finding a routing tree $T$ for an arbitrary graph $G$ with
minimum tree length is an NP-Hard problem.
  \end{abstract}

\ifTR
    \newpage
    \tableofcontents
    \newpage
    \pagenumbering{arabic}
\else
    \vspace{-.2in}
\fi

\section{Definitions and Introductory Points}
\label{sect:def-not}
  Consider a group of terminals communicating via a finite network $G=(V, E)$,
where the set of vertices (finite set $V$) and edges (finite set $E$), respectively represent the
collection of terminals and their direct communication paths.
We show $|V|$ by $n$ and $|E|$ as $m$.

\noindent
The general communication tree embedding problem is the problem of mapping the set of terminals
into the set of vertices of some physical network represented by a tree $T$.
Accordingly, the two vertices $v, u \in V(G)$ that are directly connected via $e \in E(G)$,
are connected indirectly via some path $P_T(v,u)$ in $T$.
In the case where the vertices of $G$ are mapped to the leaves of the host tree,
the underlying tree is called a \emph{routing tree}.
In this report we mostly focus on the case where the internal vertices of the
host tree have degree $3$ (known as \emph{tree layout} problem).
We denote the sets of leaf nodes and internal nodes of tree $T$ respectively by $V_L(T)$ and $V_I(T)$.
\footnote{We try to use the term node in case of trees as opposed to the term vertex, which we use for general graphs.}

For a graph $G$ and a communication tree $T$ for $G$, there are different measures defined in literature.
In following we define the two measures that we are intrusted in this report.
For a comprehensive list of measures, an interested reader can refer to~\cite{i2001layout}.
\begin{definition}{Edge Dilation}
Consider a graph $G$ and a communication tree $T$ and a bijection $\varphi$ from vertices of $G$ to leaf nodes of $T$. The \emph{dilation} $\lambda(uv,T,\varphi,G)$ of an edge $\Set{u,v} \in E(G)$
is the distance between $\varphi(u)$ and $\varphi(v)$ in $T$.
\end{definition}
\noindent
We represent the distance of two vertices $\Set{u,v}$ in a graph $G$ with $d_G(u,v)$
\begin{definition}{Edge Congestion}
Give a graph $G$ and a communication tree $T$ and and a bijective mapping $\varphi: V(G) \rightarrow V_L(T)$ . The \emph{congestion} $\delta(xv,T,\varphi,G)$ and of an edge $\Set{x,y} \in E(T)$
is the the number of edges in $\Set{u,v} \in E(G)$ that in $T$, the path $P_T(\varphi(v),\varphi(u))$ traverse trough $\Set{x,y}$.
\end{definition}
Based on the definition of the communication tree for a graph $G$, removal of every edge $\Set{x,y} \in E(T)$
partitions the set of vertices of $G$ into two component. Hence every edge of tree corresponds to a cut in
$G$. Therefore the congestion of $\Set{x,y} \in E(T)$ is the size of the cut it corresponds to.

\noindent
Several optimization problems can be defined based on these two measures.
\emph{Minimum tree layout dilation} is the problem of finding a tree layout
for a given graph $G$ such that the maximum edge dilation is minimized,
where the maximum is taken over all edge of $G$.
In~\cite{monien1985complexity} it is shown that the problem of finding a tree layout with minimum dilation is NP-hard, when the layout tree is rooted.

\noindent
Similarly, given a graph $G$, in \emph{minimum tree layout congestion} problem the goal is to find
a tree layout $T$, such that the maximum edge congestion is minimized.
In~\cite{seymour1994call} Seymour and Thomas show that the minimum tree layout congestion problem is polynomially solvable
for the case of planer graphs, and is NP-hard when considering general graphs.
In this report we study the \emph{minimum tree layout length} problem (shortly called Min Tree Length), formally defined as it follows.
\begin{definition}{Minimum tree layout length}
Consider the finite undirected graph $G=(V,E)$. The minimum tree layout length problem is the problem of
finding layout tree $T$ and a bijective mapping $\varphi: V(G) \rightarrow V_L(T)$ such that $L(T,\varphi,G) = \sum_{\Set{u,v} \in E(G)} \lambda(uv,T,\varphi,G)$ is minimized.
\end{definition}

\noindent
It is not hard to see that $\sum_{\Set{u,v} \in E(G)} \lambda(uv,T,\varphi,G) = \sum_{\Set{x,y} \in E(T)} \delta(xv,T,\varphi,G)$. Hence, in the rest of
this report we may use them interchangeably.

Accordingly, in the communication graph embedding problems,
the dilation of an edge $\Set{u,v} \in E(G)$
abstractly represent the communication delay between vertices $u$ and $v$.
Similarly the congestion of an edge $e \in E(T)$ is a representative for the
traffic on the physical link $e$. Hence tree length measure corresponds to the
average delay between the vertices of $G$ and also to the average edge congestion of the host tree.

\section{Minimum Length of Tree Layout}
\label{sect:Tree-Lenght}
  In the special case of tree layout problem, the underlying host graph is a tree $T$ where the
degree of every node is either 1 or 3 and the vertices of $G$ are being mapped to
leaves of $T$.
In this section we study minimum tree layout length. We show that Min Tree Length problem
is NP-hard for multi-graphs\footnote{By multi-graph we refer to finite graphs with possibility of parallel edges and no loop.},
and later on we show the problem stays NP-hard when restricted to the class of simple graphs.
\subsection{Min Tree Length of Complete Graphs}
\label{sec:complete-graph-optimal-layout}
Consider the complete graph $G=(V,E)$ where $\forall u,v \in V(G), \Set{u,v} \in E(G)$.
It is not hard to see that a layout tree $T$ is a solution for the Min Tree Layout problem for $G$,
iff $|V_L(T)|=n$ (and hence $|V(T)| = 2n-1$), and the summation of distance of leaf nodes of $T$ is minimized.
We denote the summation of distances of leaves of a tree $T$ by:
\begin{align*}
    \sigma_{LL}(T) = \dfrac{1}{2} \sum_{x,y \in V_L(T)} d_T(x,y)
\end{align*}

\noindent
Leaf to leaf distance summation measure $\sigma_{LL}$ is very similar to the definition of \emph{Wiener index} (proposed by chemist
Wiener~\cite{wiener1947structural}),
which is the summation of distances of all vertices of a given graph $G$ as represented in following equation.
\begin{align*}
    \sigma(G) = \dfrac{1}{2} \sum_{u,v \in V(G)} d_G(u,v)
\end{align*}
\noindent
Wiener index is widely studied both in mathematical and chemical literature.
In~\cite{fischermann2002wiener} Fischermann \emph{et al.} study Wiener index of trees.
In this works authors represent the structure of the family of the trees
that have minimum (or maximum) Wiener index among all the trees of the same order with
maximum node degree $\Delta \ge 3$.
Due to similarity of $\sigma_{LL}$ measure and Wiener index, in the rest of this subsection we borrow some of the notations and definitions
from~\cite{fischermann2002wiener} in order to study $\sigma_{LL}$ for trees with maximum degree $\Delta$.

\begin{definition}{$\mathfrak{T}(\mathcal{R}, \Delta)$ tree family}
\label{def:opt-family-structure}
Consider integers $\Delta$ and $\mathcal{R} \in \Set{\Delta, \Delta - 1}$.
For a given $n \in \mathbb{N}$, the family $\mathfrak{T}(\mathcal{R}, \Delta)$ of trees
with $n$ nodes has a unique member $\mathfrak{T}$ up to isomorphism, defined using
a planar embedding as it follows.

\noindent
Let $M_k(\mathcal{R}, \Delta) \le n < M_{k+1}(\mathcal{R}, \Delta)$ where:
\begin{align*}
  M_k(\mathcal{R}, \Delta) = \begin{cases}
      1 & \quad\text{$k=0$} \\
      1+  \mathcal{R} + \mathcal{R} \times (\Delta - 1) + \ldots + \mathcal{R} \times (\Delta - 1)^{k-1}& \quad\text{$k \ge 1$}
    \end{cases}
\end{align*}
\noindent
Figure~\ref{fig:opt-tree-embedding} depicts the embedding of tree $\mathfrak{T}$ with the following properties:
\begin{enumerate}
  \item all nodes of $\mathfrak{T}$ lie on some line $\textbf{L}_i$ for $0 \le i \le k+1$
  \item exactly one node $v$ lies on the line $\textbf{L}_0$ which has $\min \Set{n-1,\mathcal{R}}$ children on line $\textbf{L}_1$
  \item for $i = 1$ to $k-1$ every node on line $\textbf{L}_i$ is connect to $\Delta - 1$ nodes on line $\textbf{L}_{i+1}$
  and one node on line $\textbf{L}_{i-1}$
  \item the only line that may be incomplete, is line $\textbf{L}_{k+1}$. Let $n - M_k(\mathcal{R}, \Delta) = m\times (\Delta - 1) + r$
  for $0 \le r < \Delta -1$, where $n - M_k(\mathcal{R}, \Delta)$ is the number of remaining nodes on line $\textbf{L}_{k+1}$.
  Also let $\Set{v_1,..v_{m+1}}$ be the set of $m$ left most nodes on line $\textbf{L}_{k}$ where $v_{m+1}$ is the right most one in the set.
  Each of $v_1,..v_m$ nodes is connected to $\Delta - 1$ nodes on line $\textbf{L}_{k+1}$, while $v_{m+1}$ is connected to $r$ nodes from
  line $\textbf{L}_{k+1}$ (see figure~\ref{fig:opt-tree-embedding}).
\end{enumerate}
\end{definition}
\begin{figure}[h]
    \begin{center}
        \includegraphics[scale=0.9]{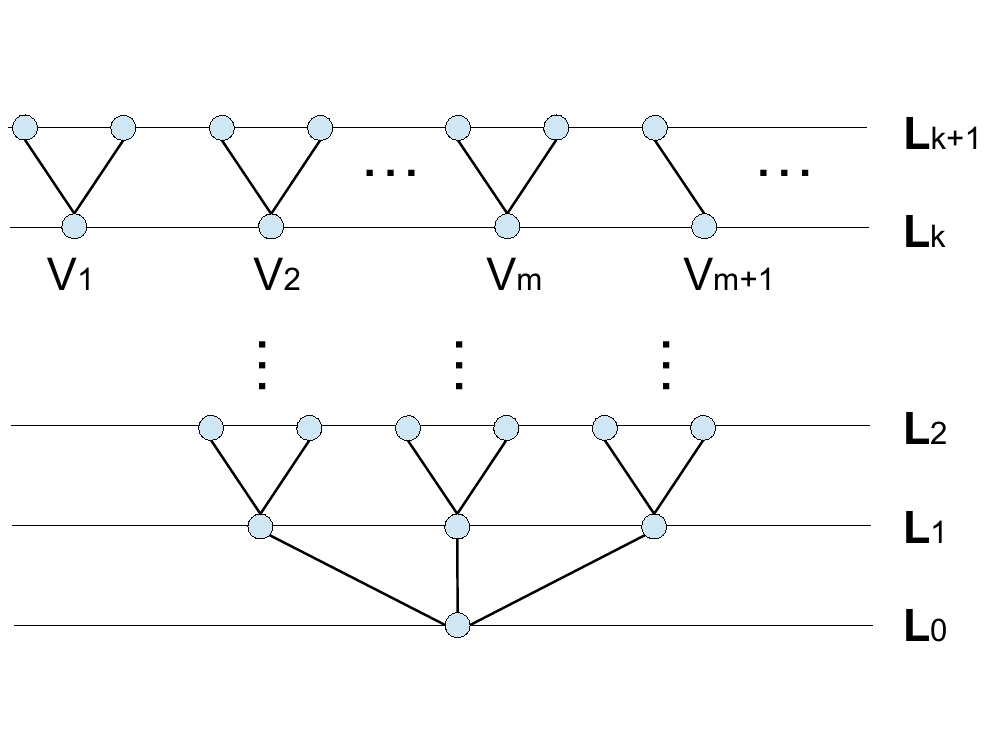}
    \end{center}
    \vspace{-0.2in}
    \caption{Planar embedding of $\mathfrak{T}$ represented as the embedding of nodes on $k+2$ lines $\textbf{L}_i$ for $0 \le i \le k+1$.}
    \label{fig:opt-tree-embedding}
\end{figure}
We defined the family $\mathfrak{T}(\mathcal{R}, \Delta)$ for the general case of trees with maximum degree $\Delta$,
while we focus on the case where the degree of every internal node is $\Delta = 3$, but all the results presented
in the rest of this subsection extend to all trees with arbitrary max degree $\Delta \ge 3$.
\begin{lemma}
\label{lemma:same-height-opt-if-in-family}
Consider tree $T \in \mathfrak{T}(\mathcal{R}, \Delta)$ of order $n$, and assume
$M_k(\mathcal{R}, \Delta) < n < M_{k+1}(\mathcal{R}, \Delta)$ for $k \ge 1$.
Let $\overline{T}$ be an arbitrary tree of order $n$ constructed from tree $\overline{T}_0 \in \mathfrak{T}(\mathcal{R}, \Delta)$,
with $M_k(\mathcal{R}, \Delta)$ nodes, by attaching $n - M_k(\mathcal{R}, \Delta)$ nodes to the leaf nodes of $\overline{T}_0$
(which lie on the line $\textbf{L}_k$). Then it is the case that either $\sigma_{LL}(\overline{T}) > \sigma_{LL}(T)$
or $\overline{T}$ is isomorphic with $T$.
\end{lemma}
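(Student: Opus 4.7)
The plan is to reduce $\sigma_{LL}(\overline{T})$ to a weighted pairwise-distance sum on the leaves of $\overline{T}_0$, use the symmetry of $\overline{T}_0$ to discard all placement-independent terms, and then prove the residual quantity is minimized precisely by the canonical placement in $T$ by means of a ``cluster-compression'' exchange argument.

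First, for every leaf $u\in V_L(\overline{T}_0)$, let $c(u)\ge 0$ be the number of new nodes attached to $u$ and set $w(u):=c(u)$ if $c(u)\ge 1$ and $w(u):=1$ otherwise, so $w(u)$ counts the $\overline{T}$-leaves descended from $u$. Applying the standard identity that expresses the sum of pairwise leaf-distances in a tree as a sum over edges of the product of leaf-counts on each side, and separating the contributions of the old edges of $\overline{T}_0$ from those of the newly attached edges, yields $\sigma_{LL}(\overline{T}) = \sum_{\{u,u'\}\subseteq V_L(\overline{T}_0)} w(u)\,w(u')\,d_{\overline{T}_0}(u,u') + (n-M_k)(N-1)$, where $N:=\sum_u w(u)$ is the total number of leaves of $\overline{T}$. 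Under the implicit layout-tree convention that every $\overline{T}_0$-leaf receiving children receives exactly $\Delta-1$ of them (with at most one leaf instead receiving $r<\Delta-1$), the multiset $\{w(u)\}$ is fixed by $(m,r)$, so $N$ and hence $(n-M_k)(N-1)$ are constants independent of the placement. Furthermore, since $\overline{T}_0$ is the fully balanced member of $\mathfrak{T}(\mathcal{R},\Delta)$ up to line $\mathbf{L}_k$, its automorphism group acts transitively on $V_L(\overline{T}_0)$, so $D:=\sum_{u'\ne u}d_{\overline{T}_0}(u,u')$ is the same constant for every leaf $u$. Writing $w(u)=1+\delta(u)$ with $\delta(u)\in\{0,\,r-1,\,\Delta-2\}$ and expanding, both the constant term $\sigma_{LL}(\overline{T}_0)$ and the linear term $D\sum_u\delta(u)$ are placement-independent.

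The entire comparison therefore reduces to showing that the residual quantity $Q(\overline{T}):=\sum_{\{u,u'\}\subseteq V_L(\overline{T}_0)}\delta(u)\,\delta(u')\,d_{\overline{T}_0}(u,u')$ satisfies $Q(\overline{T})\ge Q(T)$, with equality only when $\overline{T}\cong T$. For this I would use an exchange argument. View $\overline{T}_0$ as rooted at the node on $\mathbf{L}_0$, with root-subtrees $\overline{T}_0^{(1)},\dots,\overline{T}_0^{(\mathcal{R})}$, and let $A_s$ denote the total $\delta$-weight placed inside $\overline{T}_0^{(s)}$. A short calculation shows that transferring one unit of $\delta$-weight from a subtree with smaller $A_s$ into one with larger $A_s$ changes $Q$ by $(2k-d_{\mathrm{in}})\cdot(A_{\mathrm{small}}-A_{\mathrm{large}}-1)<0$, where $2k$ is the cross-root distance between two leaves of $\overline{T}_0$ and $d_{\mathrm{in}}<2k$ is the relevant within-subtree distance; so every such concentration move strictly decreases $Q$. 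Consequently $Q$ is minimized only when the $\delta$-weight is packed into as few root-subtrees as possible, which by symmetry of $\overline{T}_0$ is isomorphic to the left-to-right filling prescribed by $T$; one then recurses inside the unique ``partially filled'' root-subtree, which is itself a smaller element of $\mathfrak{T}$.

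The main obstacle will be carrying this exchange argument cleanly through all $k$ levels of recursion, together with a separate small case analysis for the unique leaf of ``fractional'' weight $r$ (when $r>0$), whose optimal location at the right boundary of the heavy cluster needs its own swap analysis. For the strict inequality $\sigma_{LL}(\overline{T})>\sigma_{LL}(T)$ when $\overline{T}\not\cong T$, one further has to verify that every $Q$-neutral exchange corresponds to an automorphism of $\overline{T}_0$, so that any $\overline{T}$ admitting no strictly improving exchange must already be isomorphic to $T$.
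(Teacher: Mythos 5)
Your reduction via the edge--cut identity is a genuinely different route from the paper's: the paper argues by induction on the height of the planar line embedding and performs an explicit surgery (relocating a block of leaves from one incomplete root-subtree to another, controlled by a reflection bijection $\varphi_0$), whereas you collapse everything onto the fixed skeleton $\overline{T}_0$ and reduce the comparison to a quadratic form $Q$ in the attachment weights. If completed, this would be a cleaner and more transparent argument than the paper's case analysis. However, two load-bearing steps are not actually established.

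First, the placement-independence of the non-$Q$ terms rests entirely on your ``implicit layout-tree convention'' that every filled leaf of $\overline{T}_0$ receives exactly $\Delta-1$ new children (with at most one exception receiving $r$). The lemma as stated allows an \emph{arbitrary} attachment of the $n-M_k$ nodes; then $N=(n-M_k)+\#\{u:c(u)=0\}$, so both $(n-M_k)(N-1)$ and $D\sum_u\delta(u)=D(N-L)$ depend on the attachment profile and cannot be discarded. This is not a formality: for $\Delta=3$ and $n-M_k=2$, attaching one child to each of two sibling leaves raises $\sigma_{LL}$ by $2L-2$, while the canonical member (both children on one leaf) raises it by $D+2L$ with $D\ge 2(L-1)$, so the ``wrong'' profile wins. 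Hence the convention must be derived from a degree constraint on $\overline{T}$ (e.g.\ no internal node of degree $2$, which the lemma never imposes and which the family member itself violates when $0<r<\Delta-1$), or else a cross-profile comparison must be added. The paper's proof silently makes the same restriction, so this is a shared weakness, but your write-up makes the dependence explicit and then does not discharge it.

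Second, the exchange step is only gestured at. The claimed variation $(2k-d_{\mathrm{in}})(A_{\mathrm{small}}-A_{\mathrm{large}}-1)$ treats the within-subtree distance as a single number, but the actual change when weight moves from $u_1\in S_1$ to $u_2\in S_2$ is $\sum_{u'\in S_1}\delta(u')(2k-d(u_1,u'))-\sum_{u'\in S_2}\delta(u')(2k-d(u_2,u'))$ plus a cross term, and its sign depends on where the weight sits \emph{inside} each subtree, not only on the totals $A_s$. Moreover ``transferring one unit of $\delta$-weight'' is not in general a legal modification of $\overline{T}$ within your fixed-profile class, since $\delta(u)=\max(c(u),1)-1$ is not linear in $c(u)$ across the $c=0/1$ boundary; the admissible moves are whole-cluster swaps, for which the relevant criterion is $S(u_2)<S(u_1)$ with $S(u)=\sum_{u'}\delta(u')d(u,u')$. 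Showing that the only configurations admitting no improving swap are the left-justified canonical ones, through all $k$ levels and including the single leaf of weight $r$, is exactly the combinatorial content of the lemma, and it is the part your proposal defers.
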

\begin{proof}
We proof this lemma using induction on the height of tree $\overline{T}$, when embedded on the plane as explained in definition~\ref{def:opt-family-structure}.
It is easy to check the correctness of theorem for trees of height 1 and 2.
Assume tree $\overline{T}$ of height $k$ and tree $T \in \mathfrak{T}(\mathcal{R}, \Delta)$
are not isomorphic. Let $v \in V(\overline{T})$ be the node on line $\textbf{L}_0$.
Node $v$ is connected to $\mathcal{R}$ subtrees $\Set{\overline{T}_1,\ldots, \overline{T}_{\mathcal{R}}}$.
Based on the assumption of induction every subtrees $\overline{T}_i$ is a member
of the family $\mathfrak{T}(\Delta-1, \Delta)$ of height $k$ or $k-1$.
Since $\overline{T}$ and $T$ are not isomorph, there are at least two subtrees $\overline{T}_i$ and $\overline{T}_j$ for $i \neq j$
where are incomplete on line $\textbf{L}_{k}$.
Formally speaking $\exists 1 \le i \neq j \le \mathcal{R}, \overline{T}_i, \overline{T}_j \in \mathfrak{T}(\Delta-1, \Delta)$ and
$|V(\overline{T}_i)| - M_{k-1}(\Delta-1,\Delta) = r_i > 0 \wedge |V(\overline{T}_j)| - M_{k-1}(\Delta-1,\Delta) = r_j > 0$.
Without loss of generality we assume $i = 1$ and $j = 1$ and also $|V(\overline{T}_1)| \le |V(\overline{T}_2)|$.
Figure~\ref{fig:not-isomorph-with-optimal} abstractly represents tree $\overline{T}$.

\begin{figure}[h]
        \centering
        \begin{subfigure}[b]{0.3\textwidth}
               \includegraphics[scale=.65]
                {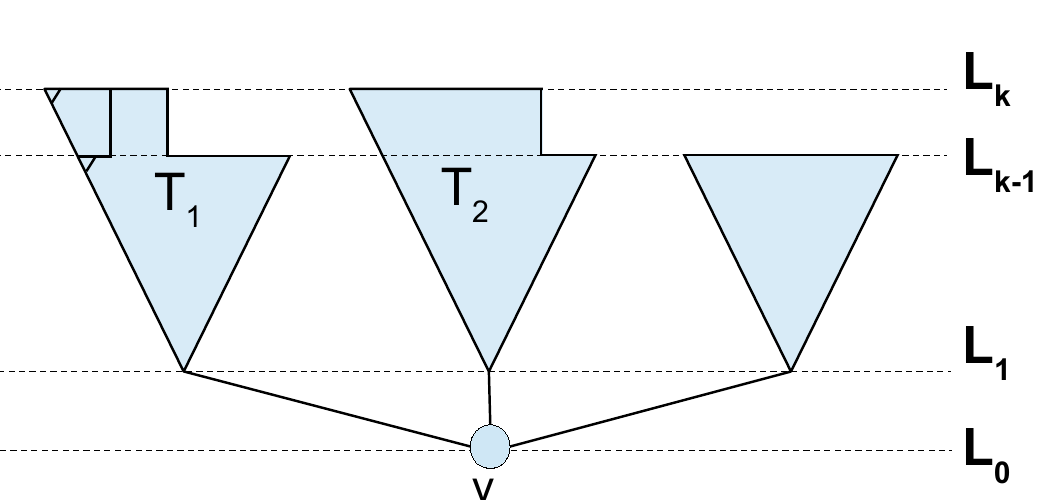}
                \caption{Initial tree $\overline{T}$.}
                 \label{fig:not-isomorph-with-optimal}
        \end{subfigure}
        \qquad\qquad\qquad\qquad
        \begin{subfigure}[b]{0.3\textwidth}
                \includegraphics[scale=.65]
                {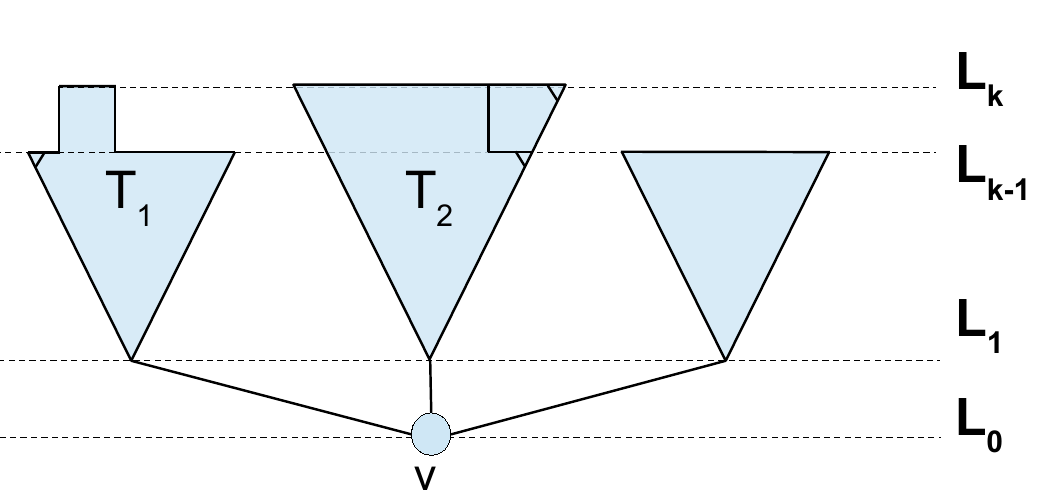}
                \caption{Tree $\overline{T}^{\prime}$ constructed from $\overline{T}$.}
                \label{fig:not-isomorph-with-optimal-after}
        \end{subfigure}
        \caption{
        Figure~\ref{fig:not-isomorph-with-optimal} represents tree $\overline{T} \notin \mathfrak{T}(\mathcal{R}, \Delta)$ of height $k+1$, constructed from tree $\overline{T}_0 \in \mathfrak{T}(\mathcal{R}, \Delta)$ of height $k$. Node $v$ is connected to at least two subtrees $\overline{T}_i,\overline{T}_j \in \mathfrak{T}(\mathcal{R}, \Delta)$ of height
        $k$ which are incomplete on line $\textbf{L}_{k}$.
        Alternative tree $\overline{T}^{\prime}$ depicted in figure~\ref{fig:not-isomorph-with-optimal-after}, is constructed by relocating some leaf nodes of subtree $\overline{T}_1$ that are on line
        $\textbf{L}_{k}$. As you see in this construction $r_2$ left most leaves of $\overline{T}_1$ are relocated to complete subtree $\overline{T}_2$. In this example the number of leaf nodes of $\overline{T}_1$ on $\textbf{L}_{k}$, is large enough to complete the subtree $\overline{T}_2$.
        }
    \label{fig:not-isomorph-with-optimal-before-after}
\end{figure}

\noindent
Let $\mathcal{\textbf{L}}_l(T)$ denote the set of nodes of tree $T$ on line $l$. We present an alternative tree $\overline{T}^{\prime}$, by relocating some leaf nodes of $\mathcal{\textbf{L}}_k(\overline{T}_{1})$ (in order from left to right) to complete the line $\textbf{L}_{k}$ of $\overline{T}_2$ (in order from right to left).
Let $\mathcal{L} \subseteq \mathcal{\textbf{L}}_k(T_{1})$ be the
set of leaf nodes of $\overline{T}_{1}$ on line $\textbf{L}_{k}$, candidate for relocation. Figure~\ref{fig:not-isomorph-with-optimal-after} depicts tree $\overline{T}^{\prime}$ constructed from $T$.

\noindent
In the alternative tree $\overline{T}^{\prime}$, consider a bijective mapping $\varphi_0$ from the nodes of $\overline{T}_{1}$ to nodes of subtree $\overline{T}^{\prime}_{2}$, where $\overline{T}^{\prime}_{2}$ is the modified version of $\overline{T}_{2}$ in $\overline{T}$. More specifically, mapping $\varphi_0$ is a reflection form $\overline{T}_{1}$ to $\overline{T}_{2}$, which reflects nodes of $\mathcal{\textbf{L}}_l(\overline{T}_{1})$ to the nodes of $\mathcal{\textbf{L}}_{l-1}(\overline{T}^{\prime}_2)$ for $2 \le l \le k$, such that the left most node on line $\textbf{L}_{l}$ of $\overline{T}_{1}$ is mapped to the right most node of $\overline{T}^{\prime}_2$ on line $\textbf{L}_{l}$.
Accordingly $\varphi_0$ maps every leaf node of $\overline{\mathcal{L}} = \mathcal{\textbf{L}}_k(\overline{T}_{1}) - \mathcal{L}$ (the set of remaining leaf nodes of $\overline{T}_{1}$ on line $\textbf{L}_k$) to one leaf node of $\overline{T}_2$ on line $\textbf{L}_{k}$.
On the other hand every leaf node of $\mathcal{\textbf{L}}_{k-1}(\overline{T}_{1})$ is mapped to one node (leaf or internal) of $\overline{T}_2$ on line $\textbf{L}_{k-1}$.

\noindent
From $\varphi_0$ we construct a bijective mapping $\varphi$ from leaf nodes of $\overline{T_1}$ to sets of leaf nodes of $\overline{T}^{\prime}_2$.
For every $w \in V_L(T)$, $\varphi(w) = \Set{\varphi_0(w)}$ if $\varphi_0(w) \in V_L(\overline{T}^{\prime}_2)$ is a leaf node node, otherwise ($\varphi_0(w)$ is an internal node of $\overline{T}^{\prime}_2$ on line $\textbf{L}_{k-1}$), $\varphi$ maps $w$ to the set of direct children of $\varphi_0(w)$ on line $\textbf{L}_{k}$ \footnote{In this case $|\varphi(w)| \le \Delta - 1$.}.

\noindent
Using the bijective mapping $\varphi$, we analyze the change in value of $\sigma_{LL}$ in the process of constructing $\overline{T}^{\prime}$ from
$\overline{T}$ as it follows.
\begin{enumerate}
  \item Clearly the internal summation of distances of nodes in $\mathcal{L}$ stays unchanged.
  \item For every leaf node $w_1 \in \mathcal{L}, w_2 \in V_L(T) - V_L(T_1)\uplus V_L(T_2)$, it is the case that $d_{\overline{T}}(w_1,w_2) = d_{\overline{T}^{\prime}}$. Hence, the summation of distances among nodes in $\mathcal{L}$ and leaf nodes of $V_L(\overline{T}) - V_L(\overline{T}_1)\uplus V_L(\overline{T}_2)$ also does not change.
  \item We show that for every $w_1 \in \mathcal{L}$ the summation of distances of $w_1$ from leaf nodes of $V_L(\overline{T}_2)\uplus V_L(\overline{T}_1) - \Set{w_1}$ is
  greater than the summation of distances of $\varphi(w_1)$ from leaf nodes of $V_L(\overline{T}^{\prime}_1)\uplus V_L(\overline{T}^{\prime}_2) - \Set{\varphi{w_1}}$.
  \noindent
  \begin{notxxx}
        Let $\mathcal{L}_1, \mathcal{L}_2 \subset V(T)$, for an arbitrary tree $T$. By $\sigma_T(\mathcal{L}_1,\mathcal{L}_2)$ we denote the summation of distances of nodes of $\mathcal{L}_1$ and $\mathcal{L}_1$. Formally speaking:
        \begin{align*}
        \sigma_T(\mathcal{L}_1,\mathcal{L}_2) = \dfrac{1}{2} \sum_{v \in \mathcal{L}_1} \sum_{u \in \mathcal{L}_2} d_T(v,u)
        \end{align*}
  \end{notxxx}

  \noindent
  For every $w_2 \in V_L(\overline{T}_1) - \mathcal{L}$:
  \begin{itemize}
    \item If $w_2$ is on line $\textbf{L}_k$, it is the case that $d_{\overline{T}}(w_1,w_2) = d_{\overline{T}^{\prime}}(\varphi(w_1),\varphi(w_2))$. Therefore:
        \begin{align}
        \label{eq:lemma-internal}
        \sigma_{\overline{T}}(\Set{w_1},\Set{w_2}\uplus \varphi(w_2)) = \sigma_{\overline{T}}(\Set{\varphi(w_1)},\Set{w_2} \uplus \varphi(w_2))
        \end{align}
    \item If $w_2$ is on line $\textbf{L}_{k-1}$ and $\varphi(w_2)$ maps $w_2$ to a leaf node of $\overline{T}^{\prime}_2$ on line $\textbf{L}_{k-1}$,
    similar to previous case we have:
    \begin{align}
    \label{eq:lemma-others}
        \sigma_{\overline{T}}(\Set{w_1},\Set{w_2}\uplus \varphi(w_2)) = \sigma_{\overline{T}}(\Set{\varphi(w_1)},\Set{w_2} \uplus \varphi(w_2))
    \end{align}
    \item Otherwise $w_2$ is on line $\textbf{L}_{k-1}$ and $\varphi(w_2)$ maps $w_2$ to a non-empty set of leaf node of $\overline{T}^{\prime}_2$ on line $\textbf{L}_{k}$. Assume the size of this set is $1 \le \nabla \le \Delta$. Since for some nodes $w_2$ where $|\varphi(w_2)| = \nabla = \Delta - 1 > 1$, then for such $w_2$, the following equally holds:
    \begin{align}
    \label{eq:lemma-main}
        \sigma_{\overline{T}}(\Set{w_1},\Set{w_2}\uplus \varphi(w_2)) - \sigma_{\overline{T}^{\prime}}(\Set{\varphi(w_1)},\Set{w_2} \uplus \varphi(w_2)) =  &\\
        \Big( d_{\overline{T}}(w_1,w_2) + 2k \times \nabla \Big) - \Big( 2k + (d_{\overline{T}^{\prime}}(\varphi(w_1),\varphi_0(w_2))+1) \times \nabla \Big) =  &\nonumber\\
        \Big( d_{\overline{T}}(w_1,w_2) + 2k \times \nabla \Big) - \Big( 2k + (d_{\overline{T}}(w_1,w_2)+1) \times \nabla \Big) =  &\nonumber \\
        (2k - d_{\overline{T}}(w_1,w_2)) \times (\nabla - 1) - d_{\overline{T}}(w_1,w_2)) \ge &\nonumber \\
        2k - 2d_{\overline{T}}(w_1,w_2)) > 0 &\nonumber
    \end{align}
  \end{itemize}
\end{enumerate}

\noindent
Putting the results of previous cases and equations~\ref{eq:lemma-internal},~\ref{eq:lemma-others} and ~\ref{eq:lemma-main}, we conclude that
$\sigma_{LL}(\overline{T}) > \sigma_{LL}(\overline{T}^{\prime})$.
If $\overline{T}^{\prime}_1 \notin \mathfrak{T}(\mathcal{R}, \Delta)$, then based on the assumption of induction, replacing the subtree $\overline{T}^{\prime}_1$ with subtree $ \overline{T}^{''}_1 \in \mathfrak{T}(\mathcal{R}, \Delta)$ of the same order, results in tree $\overline{T}^{''}$, where $\sigma_{LL}(\overline{T}^{''}) < \sigma_{LL}(\overline{T}^{\prime})$.

\noindent
Using the same approach and continuing with $\overline{T}^{''}$, in a sequence of leaf node relocations, we can construct the final tree $\widetilde{T}$, such that
in $\widetilde{T}$, node $v$ on line $\textbf{L}_0$ has exactly one incomplete subtree $\widetilde{T}_i$ where
$\widetilde{T}_i \in \mathfrak{T}(\Delta - 1, \Delta)$.
More specifically, $\forall 1 \le l < i$, all leaf nodes of $\widetilde{T}_l$ lei on line $\textbf{L}_{k}$ and
$\forall i < l \le \Delta$, all leaf nodes of $\widetilde{T}_l$ lei on line $\textbf{L}_{k-1}$, \emph{i.e.} $\widetilde{T} \in \mathfrak{T}(\mathcal{R}, \Delta)$.
\end{proof}
\begin{notxxx}
Given an arbitrary tree $T$, we define the \emph{planar line embedding} of $T$ similar to the the approach in the definition~\ref{def:opt-family-structure}.
Starting from a designated $v \in V(T)$, we embed $T$ on lines of plane, where $v$ lies on line $\textbf{L}_0$ and direct neighbours of $v$ are placed on line $\textbf{L}_1$.
Similarly all the nodes in distance $d$ from $v$ are placed on line $\textbf{L}_d$. Also for $u \in V(T)$ on line $\textbf{L}_l$, the subtree rooted
at $u$, where all its nodes are on lines $\textbf{L}_{l^{\prime}}$ for $l^{\prime} \ge l$ is denoted by $T_u$.
Formally speaking $w \in V(T_u)$ iff $u$ is on the shortest path from $w$ to $v$ on line $\textbf{L}_0$.
\end{notxxx}
\begin{notxxx}
Consider a tree $T$ of order $n$ and a node $v \in V(T)$ of degree $\nabla$. Removing $v$  form $T$ partitions $T$ into a set of subtrees $\Set{\mathcal{T}_1,\ldots,\mathcal{T}_{\nabla}}$. We call node $v$ \emph{central} if $\forall 1 \le i \le \nabla, |V(\mathcal{T}_i)| \le \dfrac{n}{2}$. The
set of central nodes of tree $T$ is represented by $\mathcal{C}_T$.
\end{notxxx}
\begin{theorem}
\label{thr:central-node}
Every arbitrary tree $T$ has at least one and at most two central nodes. In other word $1 \le |\mathcal{C}_T| \le 2$.
\end{theorem}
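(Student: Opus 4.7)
The plan is to prove this via the standard centroid argument. Define $w : V(T) \to \mathbb{N}$ by $w(v) = \max_i |V(\mathcal{T}_i)|$, where $\mathcal{T}_1, \ldots, \mathcal{T}_{\nabla}$ are the components of $T-v$; then $v \in \mathcal{C}_T$ iff $w(v) \le n/2$. The existence claim $|\mathcal{C}_T| \ge 1$ and the bound $|\mathcal{C}_T| \le 2$ will be handled separately.

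For existence, I would pick $v^\star \in \arg\min_{v \in V(T)} w(v)$ and show $v^\star \in \mathcal{C}_T$. Suppose for contradiction that some component $\mathcal{T}$ of $T - v^\star$ has $|V(\mathcal{T})| > n/2$, and let $u$ be the unique neighbour of $v^\star$ that lies in $\mathcal{T}$. Removing $u$ now produces (i) a component containing $v^\star$ of size $n - |V(\mathcal{T})| < n/2$, and (ii) a collection of subtrees of $\mathcal{T}$ not containing $u$, each of size at most $|V(\mathcal{T})| - 1$. In particular every component of $T - u$ has fewer than $|V(\mathcal{T})| \le w(v^\star)$ nodes, giving $w(u) < w(v^\star)$ and contradicting minimality. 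Hence $v^\star$ is central.

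For uniqueness (up to two), suppose $v_1, v_2 \in \mathcal{C}_T$ are distinct. Let $A$ be the component of $T - v_1$ that contains $v_2$ and $A'$ the component of $T - v_2$ that contains $v_1$; centrality yields $|A| \le n/2$ and $|A'| \le n/2$. Let $v_1 = p_0, p_1, \ldots, p_k = v_2$ be the unique path in $T$ between them. Every $x \in V(T)$ attaches to this path at a unique $p_i$, and $x \in A$ iff $i \ge 1$ while $x \in A' $ iff $i \le k-1$; consequently $A \cup A' = V(T)$ and $A \cap A'$ is exactly the set of internal path nodes $\{p_1, \ldots, p_{k-1}\}$. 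Inclusion--exclusion therefore gives
\begin{align*}
n \;=\; |A \cup A'| \;=\; |A| + |A'| - |A \cap A'| \;\le\; n - (k-1),
\end{align*}
so $k \le 1$ and $v_1, v_2$ are adjacent. If a third central node $v_3$ existed, it would be adjacent to both $v_1$ and $v_2$, producing a triangle in $T$, which is impossible. Hence $|\mathcal{C}_T| \le 2$.

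The only subtle step is the inclusion--exclusion along the $v_1$--$v_2$ path: once the set-theoretic description of $A \cap A'$ as the internal path nodes is established, the inequality pinches exactly and the rest is immediate. The existence half is then just the standard ``move toward the heavier side to decrease the weight'' argument.
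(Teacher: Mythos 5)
Your proof is correct, and it is the standard centroid (Jordan) argument: existence by moving from any vertex toward its heaviest component until the weight $w$ stops decreasing, and the at-most-two bound by showing two centroids must be adjacent and that a third would force a triangle. The paper itself offers no proof of this theorem --- it simply cites Shiloach (1979) --- so your write-up is strictly more self-contained than what appears in the text. One small imprecision: $A \cap A'$ is not \emph{exactly} $\{p_1,\ldots,p_{k-1}\}$ but rather the set of all vertices whose nearest path vertex is an internal path node (subtrees hanging off $p_1,\ldots,p_{k-1}$ belong to it as well); since this only makes $|A \cap A'|$ larger, the inclusion--exclusion inequality still pinches to $k \le 1$ and the argument goes through unchanged.
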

\noindent
For proof see~\cite{shiloach1979}. Using this theorem, we proof the main result of this subsection as we present
in the theorem~\ref{thr:opt-LL-tree}.
\begin{theorem}
\label{thr:opt-LL-tree}
Consider tree $T$ with maximum node degree $\Delta$.
where $\sigma_{LL}(T) \le \sigma_{LL}(T^{\prime})$ for every tree $T^{\prime}$ (that $|V_L(T^\prime)| = |V_L(T)|$). Then in the planar line embedding of $T$ with central node $v \in \mathcal{C}_T$ on fixed line $\textbf{L}_0$, it is the case that:
\begin{itemize}
  \item $T \in \mathfrak{T}(\Delta, \Delta)$
  \item $T_u \in \mathfrak{T}(\Delta-1, \Delta)$ for $u \neq v$
\end{itemize}
\end{theorem}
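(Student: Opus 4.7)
The plan is to fix an optimal $T$, pick a central node $v\in\mathcal{C}_T$ (guaranteed by Theorem~\ref{thr:central-node}), put $v$ on line $\mathbf{L}_0$ of the planar line embedding, and establish the two structural claims by local exchange arguments that lean on Lemma~\ref{lemma:same-height-opt-if-in-family}. The argument splits into three parts: (i) every proper subtree $T_u$ hanging off $v$ already lies in $\mathfrak{T}(\Delta-1,\Delta)$; (ii) $v$ has the maximum possible degree $\min\{n-1,\Delta\}$; and (iii) the sizes of the subtrees rooted at $v$'s children agree with the unique distribution specified by Definition~\ref{def:opt-family-structure}.

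For part (i), suppose some $T_u$ is not in $\mathfrak{T}(\Delta-1,\Delta)$. Viewing $u$ as a root with $\mathcal{R}=\Delta-1$ children, Lemma~\ref{lemma:same-height-opt-if-in-family} produces a replacement $T_u'\in\mathfrak{T}(\Delta-1,\Delta)$ of the same order with $\sigma_{LL}(T_u')<\sigma_{LL}(T_u)$. Splicing $T_u'$ in place of $T_u$ strictly decreases the internal contribution; the key extra point to verify is that the multiset of distances from $u$ to the leaves of $T_u$ is not increased by the Lemma's relocation step (leaves only move up toward $u$), so the cross contribution between leaves inside $T_u$ and the rest of $T$ does not grow either. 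This contradicts optimality.

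For part (ii), if $\deg(v)<\min\{n-1,\Delta\}$ then some child $u$ of $v$ hosts a subtree of height $\ge 2$. I would promote one maximal subtree of $u$, say rooted at $w$, to become a new direct child of $v$: the intra-subtree distances are preserved, every leaf-to-leaf distance that previously crossed the edge $vu$ going into the $w$-side decreases by exactly $2$, and no distance increases, so $\sigma_{LL}$ strictly drops. For part (iii), once parts (i) and (ii) are in place, the only remaining freedom is how to distribute the $n-1$ non-root nodes across the $\Delta$ children of $v$. Using the central-node inequality $|V(T_u)|\le n/2$ together with a pairwise exchange (moving one leaf on the deepest line from a larger sibling subtree to a smaller one and reapplying part (i)) forces all full lines $\mathbf{L}_1,\dots,\mathbf{L}_k$ to be saturated and all surplus nodes to populate a contiguous prefix of $\mathbf{L}_{k+1}$, which is exactly the configuration of $\mathfrak{T}(\Delta,\Delta)$.

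The step I expect to be the main obstacle is (ii) combined with the boundary behavior of the incomplete layer $\mathbf{L}_{k+1}$: the promotion operation that raises $\deg(v)$ must be carried out without creating a node of degree exceeding $\Delta$ and without destroying the $\mathfrak{T}(\Delta-1,\Delta)$ structure of the remaining subtrees established in (i). Handling the edge case where every child subtree of $v$ is already ``tight'' at height $k$, so that the only material available to promote lives on the partially filled line, requires a careful accounting of how a promotion shifts nodes between $\mathbf{L}_k$ and $\mathbf{L}_{k+1}$ and of how these shifts interact with the distance inequality used in Lemma~\ref{lemma:same-height-opt-if-in-family}; I expect this to be where the proof is least mechanical.
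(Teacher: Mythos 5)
Your overall strategy --- local exchange arguments anchored at a central node, leaning on Lemma~\ref{lemma:same-height-opt-if-in-family} --- is the same in spirit as the paper's (which proceeds by induction on the height of the planar line embedding, handles a ``depth gap $\ge 2$'' case by a whole-subtree swap or a leaf relocation controlled by centrality, and reduces the balanced case to Lemma~\ref{lemma:same-height-opt-if-in-family}). However, two of your three steps have genuine gaps. In part (i) you invoke Lemma~\ref{lemma:same-height-opt-if-in-family} to replace an arbitrary non-conforming subtree $T_u$ by a member of $\mathfrak{T}(\Delta-1,\Delta)$ of the same order with strictly smaller $\sigma_{LL}$. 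The lemma does not say this: its hypothesis restricts the competitor $\overline{T}$ to trees obtained from a \emph{complete} member $\overline{T}_0\in\mathfrak{T}(\mathcal{R},\Delta)$ of height $k$ by attaching the surplus nodes to the leaves on $\mathbf{L}_k$, i.e.\ to trees whose leaf depths already differ by at most one. A general $T_u\notin\mathfrak{T}(\Delta-1,\Delta)$, say with leaves at depths differing by three, is outside its scope, and the statement you need is essentially the theorem itself at smaller height. The paper supplies this by an explicit induction on height and invokes the induction hypothesis for the subtrees $T_1,\dots,T_\Delta$; your proposal has no such induction, so part (i) is circular as written.

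The exchange in part (ii) is not unconditionally improving. Write $A=V_L(T_w)$, $B=V_L(T_u)\setminus V_L(T_w)$, $C=V_L(T)\setminus V_L(T_u)$. Reattaching $w$ directly to $v$ \emph{increases} $d(a,b)$ by one for every $a\in A$, $b\in B$ (the path now detours through $v$), and decreases $d(a,c)$ by one (not two) for every $a\in A$, $c\in C$; all other distances are unchanged. The net change in $\sigma_{LL}$ is therefore $|A|\bigl(|B|-|C|\bigr)$, whose sign must be controlled by the size bound coming from the centrality of $v$ --- this is precisely the computation the paper carries out in its Case~1.1 (equation~\ref{eq:unbalance-tree-before-after}). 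Your claim that ``no distance increases'' is false, and without the centrality bookkeeping the promotion can increase $\sigma_{LL}$. Part (iii) is close in outline to the paper's Cases 1.2 and 2, but it inherits the need for a correctly set-up induction from part (i). The difficulty you flagged (the boundary behavior of the incomplete layer $\mathbf{L}_{k+1}$) is not where the proof actually strains; the two points above are.
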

\begin{proof}

\noindent
The proof of this theorem is carried out using induction on the height of planar line embedding of tree $T$.
It is not hard to check the correctness of theorem for trees of height 1 and 2.
Let $T$ be a graph of height $h \ge 3$, and let $u_1,\ldots,u_{\Delta}$ be the direct neighbours of $v$ on line $\textbf{L}_0$ and $T_1,\ldots,T_{\Delta}$
respectively be their corresponding subtrees.

\noindent
\paragraph{Case 1: $\exists w_1, w_2 \in V_L(T), d_T(w_1,v) \ge d_T(w_2,v) + 2$.}
Based on the assumption of induction $w_1$ and $w_2$ can not be on the same subtree. Without loss of the generality
assume $w_1$ and $w_2$ are the two leaf nodes with maximum distance and $w_1 \in T_1$ and $w_2 \in T_2$. Let $T_{11},T_{12},\ldots, T_{1,\Delta -1} \subset T_1$ be subtrees respectively with roots $u_{11}, \ldots u_{1,\Delta -1}$ connected to $u_1$ (nodes $u_11, \ldots u_{1,\Delta -1}$ lie on line $\textbf{L}_2$).
Also assume $w_1 \in V_L(T_{11})$.

\noindent
\paragraph{Case 1.1: $|V_L(T_{11})| > |V_L(T_2)|$.}
We construct an alternative tree $\widetilde{T}$ by removing edges $\Set{u_1,u_{11}}$ and $\Set{v,u_2}$ and introducing
two new edges $\Set{v,u_{11}}$ and $\Set{u_1,u_2}$. The structures of initial tree $T$ and the alternative tree $\widetilde{T}$ are represented in figure~\ref{fig:unbalance-tree-before-after}.
\begin{figure}[h]
        \centering
        \begin{subfigure}[b]{0.3\textwidth}
                \includegraphics[scale=.7]
                {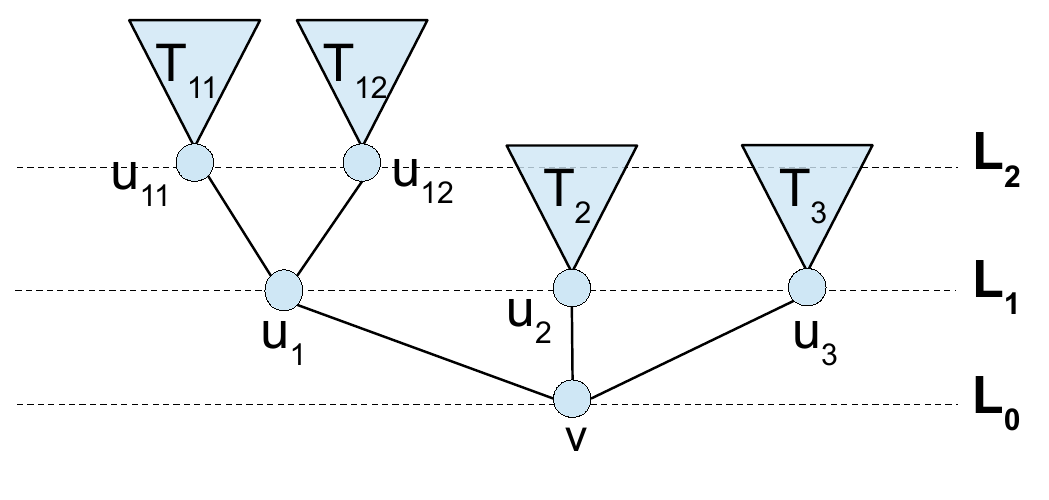}
                \caption{Initial tree $T$ where $d_T(w_1,v) \ge d_T(w_2,v) + 2$ for $w_1 \in V_L(T_{11})$ and $w_2 \in V_L(T_{2})$.}
        \end{subfigure}
        \qquad\qquad\qquad
        \begin{subfigure}[b]{0.3\textwidth}
                \includegraphics[scale=.7]
                {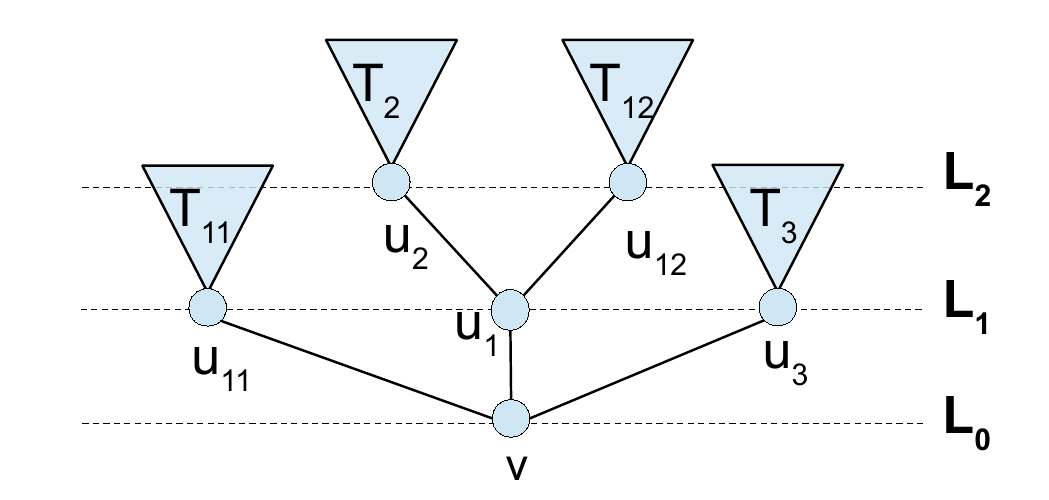}
                \caption{Alternative tree $\widetilde{T}$ constructed from $T$.}
        \end{subfigure}
        \caption{Representation of initial tree $T$ and its  modified version $\widetilde{T}$.}
    \label{fig:unbalance-tree-before-after}
\end{figure}
One can verify that the following equation~\ref{eq:unbalance-tree-before-after} correctly represents the relation between $\sigma_{LL}(T)$ and $\sigma_{LL}(\widetilde{T})$.
\begin{align}
\label{eq:unbalance-tree-before-after}
\sigma_{LL}(T) - \sigma_{LL}(\widetilde{T}) = \\
& |V_L(T_{11})|\times ( \sum_{2 \le i \le \Delta - 1}|V_L(T_{1i})| - \sum_{3 \le i \le \Delta}|V_L(T_{i})|) \nonumber \\
& +  |V_L(T_{2})|\times ( \sum_{3 \le i \le \Delta}|V_L(T_{i})| - \sum_{2 \le i \le \Delta - 1}|V_L(T_{1i})|) \nonumber \\
& = (|V_L(T_{11})| - |V_L(T_{2})|) \times ( \sum_{2 \le i \le \Delta - 1}|V_L(T_{1i})| - \sum_{3 \le i \le \Delta}|V_L(T_{i})|) \nonumber
\end{align}
\noindent
It is the case that $\sum_{2 \le i \le \Delta - 1}|V_L(T_{1i})| < \sum_{3 \le i \le \Delta}|V_L(T_{i})|$ otherwise it must be the case that $|V(T_1)| > \sum_{2 \le i \le \Delta}|V(T_{i})| > \dfrac{|V(T)|}{2}$, which is in contradiction with the centrality of node $v$.
Therefore $\sigma_{LL}(T) - \sigma_{LL}(\widetilde{T}) > 0$, which contradicts the optimality of $T$.

\noindent
\paragraph{Case 1.2: $|V_L(T_{11})| \le |V_L(T_2)|$.} Hence, $d_T(w_1,v) = d_T(w_2,v) + 2$ and $w_1 \in V_L(T_{11})$ is located on line $\textbf{L}_k$ and $w_2 \in V_L(T_2)$ lies on line $\textbf{L}_{k-2}$. Also non of subtrees $T_{11}$ and $T_2$ can be complete respectively on lines $\textbf{L}_k$ and $\textbf{L}_{k-1}$.

\noindent
Let $\mathcal{\textbf{L}}_l(T)$ denote the set of nodes of tree $T$ on line $l$. Similar to the proof of lemma~\ref{lemma:same-height-opt-if-in-family}, we present an alternative tree $\widetilde{T}$, by relocating some leaf nodes of $\mathcal{\textbf{L}}_k(T_{11})$ (in order from left to right) to complete the line $\textbf{L}_{k-1}$ of $T_2$ (in order from right to left).
Let $\mathcal{L} \subseteq \mathcal{\textbf{L}}_k(T_{11})$ be the
set of leaf nodes of $T_{11}$ on line $\textbf{L}_{k}$, candidate for relocation. Based on an exact reasoning as in lemma~\ref{lemma:same-height-opt-if-in-family} (case 3), which we omit, it can be inferred that the summation of distance of leaf nodes in
$\mathcal{L}$ from leaf nodes of $T_{11} \uplus T_{2}$ reduces going from $T$ to $\widetilde{T}$. Formally it can be deduced that:
\begin{align}
\label{eq:internal-change-case12}
\sigma_T(\mathcal{L}, V_L(T_{11})) + \sigma_T(\mathcal{L}, V_L(T_{2})) >
\sigma_{\widetilde{T}} (\widetilde{\mathcal{L}}, V_L(\widetilde{T}_{11}))+ \sigma_{\widetilde{T}} (\widetilde{\mathcal{L}}, V_L(\widetilde{T}_{2}))
\end{align}
\noindent
Where $\widetilde{T}_{11}$ and $\widetilde{T}_{2}$ respectively correspond to $T_{11}$ and $T_2$ after relocating leaf nodes of $\mathcal{L}$ (represented by $\mathcal{\widetilde{L}}$ in $\widetilde{T}$).

\noindent
On the other hand, relocating $\mathcal{L}$, increases the distance of every leaf node in $\mathcal{L}$
from leaf node of $T_{12},\ldots,T_{1,\Delta-1}$ by 1 unit, while it decreases the distance of every node of $\mathcal{L}$
from every leaf node of $T_3,\ldots,T_{\Delta}$. Since we assumed that $w_1 \in V_L(T_{11})$ and $w_2 \in V_L(T_2)$ have the maximum distance among
all leaf nodes, then:
\begin{align}
\label{eq:others-change-case12}
|V_L(T_{12})|+\ldots+|V_L(T_{1,\Delta-1})| < |V_L(T_{3})|+\ldots+|V_L(T_{\Delta})|
\end{align}

From equations~\ref{eq:internal-change-case12} and~\ref{eq:others-change-case12} we conclude the following
contradictory result:
\begin{align}
\label{eq:others-case12-final}
\sigma_{LL}(T) - \sigma_{LL}(\widetilde{T})= & \\
& \sigma_T(\mathcal{L}, V_L(T_{11})) - \sigma_{\widetilde{T}} (\widetilde{\mathcal{L}}, V_L(\widetilde{T}_2))+ \nonumber\\
& \sigma_T(\mathcal{L}, V_L(T_{2})) - \sigma_{\widetilde{T}} (\widetilde{\mathcal{L}}, V_L(\widetilde{T}_{11}))+ \nonumber\\
& \sigma_T(\mathcal{L}, V_L(T) - V_L(T_{11})\uplus V_L(T_{2})) - \sigma_{\widetilde{T}} (\widetilde{\mathcal{L}}, V_L(\widetilde{T}) - V_L(\widetilde{T}_{11})\uplus V_L(\widetilde{T}_{2})) & \nonumber\\
& > 0 \nonumber
\end{align}

\noindent
\paragraph{Case 2: $\forall w_1, w_2 \in V_L(T), |d_T(w_1,v) - d_T(w_2,v)| < 2$.}
Since based on the assumption of induction $T_1,\ldots,T_{\Delta} \in \mathfrak{T}(\Delta-1, \Delta)$, then
$T$ can be constructed from some tree $T_0 \in \mathfrak{T}(\mathcal{R}, \Delta)$ of order $|V(T_0)| = M_k(\mathcal{R}, \Delta)$,
by attaching $n - M_k(\mathcal{R}, \Delta)$ nodes to the leaf nodes of $T_0$.
Therefore, based on lemma~\ref{lemma:same-height-opt-if-in-family}, $T$ is optimal iff $T \in \mathfrak{T}(\Delta, \Delta)$.
\end{proof}
\begin{corollary}
\label{coral:opt-tree-layout-complete-graph}
Consider the complete graph $G$ of order $n = |V(G)|$. Tree $T$ is an optimal tree layout for $G$ iff $|V_L(T)| = n$ and $T \in \mathfrak{T}(3, 3)$.
\end{corollary}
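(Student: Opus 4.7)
The plan is to reduce the corollary to a direct application of Theorem~\ref{thr:opt-LL-tree} specialized to $\Delta = 3$. The central observation is that when $G$ is complete, every unordered pair of distinct vertices of $G$ is an edge, so for any bijection $\varphi : V(G) \to V_L(T)$ we have
\begin{align*}
L(T, \varphi, G) = \sum_{\{u,v\} \in E(G)} d_T(\varphi(u), \varphi(v)) = \sum_{\substack{\{x,y\} \subseteq V_L(T) \\ x \neq y}} d_T(x, y) = \sigma_{LL}(T).
\end{align*}
In particular, $L$ is independent of the choice of $\varphi$, and minimizing over pairs $(T, \varphi)$ reduces to minimizing $\sigma_{LL}(T)$ over layout trees $T$ with $|V_L(T)| = n$ (this leaf count is forced, since $\varphi$ must be a bijection onto $V_L(T)$).

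Next I would invoke Theorem~\ref{thr:opt-LL-tree} with $\Delta = 3$. The tree-layout requirement forces every internal node to have degree exactly $3$, so in particular the maximum node degree of any candidate $T$ equals $3$. The theorem then characterizes the $\sigma_{LL}$-minimizers among trees of a prescribed leaf count and maximum degree $\Delta$ as precisely the members of $\mathfrak{T}(\Delta, \Delta) = \mathfrak{T}(3,3)$. Combined with the reduction above, this yields both directions of the corollary: if $T$ is optimal for the tree-layout problem on $G$, then $T$ minimizes $\sigma_{LL}$ over $n$-leaf layout trees and so $T \in \mathfrak{T}(3,3)$; conversely, any $T \in \mathfrak{T}(3,3)$ with $|V_L(T)| = n$, paired with any bijection $\varphi$, attains the minimum value of $L$.

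I expect no real obstacle here: the substantive content lives in Lemma~\ref{lemma:same-height-opt-if-in-family} and Theorem~\ref{thr:opt-LL-tree}, and completeness of $G$ serves only to turn the objective into the $\varphi$-invariant quantity $\sigma_{LL}(T)$. The one bookkeeping point is to verify that every member of $\mathfrak{T}(3,3)$ is actually a valid tree layout, \emph{i.e.}\ that every internal node has degree exactly $3$; this is immediate from Definition~\ref{def:opt-family-structure}, since the distinguished node on $\textbf{L}_0$ has $\mathcal{R} = 3$ children and every other non-leaf node on $\textbf{L}_i$ has $\Delta - 1 = 2$ children on $\textbf{L}_{i+1}$ together with its one parent on $\textbf{L}_{i-1}$. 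Hence $\mathfrak{T}(3,3)$ sits inside the feasible set of layout trees, and the corollary follows.
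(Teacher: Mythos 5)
Your argument is correct and is essentially the route the paper takes: the paper notes at the start of Section~\ref{sec:complete-graph-optimal-layout} that for complete $G$ optimality of a layout tree is equivalent to $|V_L(T)|=n$ together with minimality of $\sigma_{LL}(T)$, and the corollary is then read off from Theorem~\ref{thr:opt-LL-tree} with $\Delta=3$, exactly as you do. Your added bookkeeping check that every member of $\mathfrak{T}(3,3)$ is a valid layout tree is a reasonable (if implicit in the paper) detail.
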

\begin{exaxxx}
Let $G$ be a complete graph of order $n = 2 ^ l$ for some $ l > 1$. Based on the result of theorem~\ref{thr:opt-LL-tree}, a layout tree $T$ for $G$ (of order $2n - 1$) has minimum value $\sigma_{LL}$ (and accordingly is a solution for Min Layout Length) iff it is isomorphic to some tree with a structure similar to the tree in figure~\ref{fig:opt-layout-tree-for-complete}.
\begin{figure}[h]
        \centering
        \begin{subfigure}[b]{0.3\textwidth}
                \includegraphics[scale=.7]
                {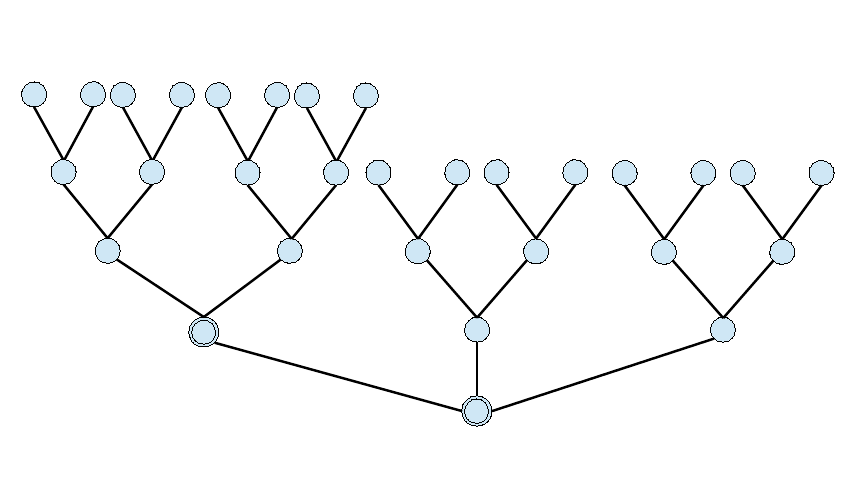}
        \end{subfigure}
        \qquad\qquad
        \begin{subfigure}[b]{0.3\textwidth}
                \includegraphics[scale=.7]
                {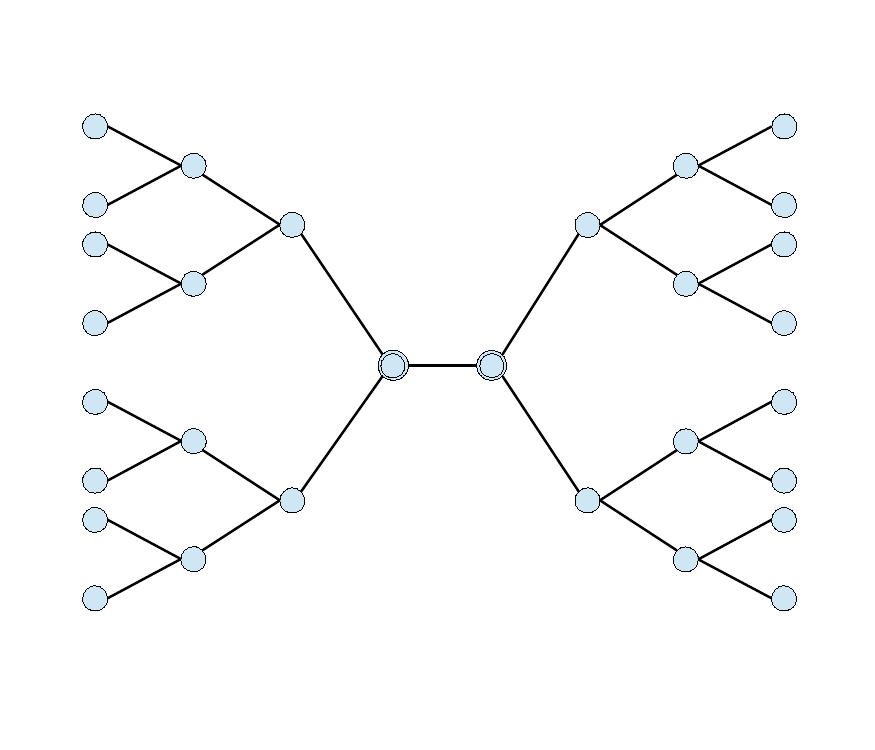}
        \end{subfigure}
        \caption{Two planar embeddings of an optimal layout tree $T$ of size $2n - 1$, corresponding to the complete graph $G$ with $n = 2^l$ vertices.}
    \label{fig:opt-layout-tree-for-complete}
\end{figure}
\end{exaxxx}

\subsection{Min Tree Length of Multi-Graphs}
\label{sec:multi-graph-optimal-layout}
Graph $G$ is a multi-graph if either it is a simple undirected graph, or it can be constructed from
a simple undirected graph by adding parallel edges. In our main result of this report
we show that the Min Tree Length problem is NP-hard for class of multi-graphs.
Finally we show that this result can be extended to the class of simple graphs.
\begin{definition}{Equal Size $4$-Clique Cover} Given graph $G$, Equal Size $4$-Clique Cover problem is the problem of partitioning $V(G)$ into
four disjoint subsets  $V_1, \ldots, V_4$ s.t. $G(V_i)$ is a clique of size $\dfrac{n}{4}$, for $1 \le i \le 4$ .
\end{definition}
\begin{lemma}
\label{lemma:equal-4-clique-cover}
Equal Size $4$-Clique Cover problem is NP-complete, even for the class of graphs of order $2^l$ vertices for some $l \in \mathbb{N}$.
\end{lemma}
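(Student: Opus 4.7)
The plan is to verify membership in NP, which is routine (a verifier guesses the four subsets $V_1,\ldots,V_4$ and checks in polynomial time that each induced subgraph $G[V_i]$ is a clique of size $n/4$), and then to give a polynomial-time reduction from the classical \emph{$4$-Clique Partition} problem --- given a graph $H$, decide whether $V(H)$ can be partitioned into at most four cliques --- whose NP-hardness is standard since it is equivalent to $4$-coloring the complement of $H$.

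The reduction proceeds as follows. Given an instance $H$ with $|V(H)|=n$, choose $l=\lceil \log_2(4n)\rceil$ so that $2^l\geq 4n$, and let $p=2^l-n$. I would construct a graph $H'$ on $2^l$ vertices by taking $V(H')=V(H)\cup K$, where $K$ is a set of $p$ fresh vertices that form a clique, and adding every possible edge between $K$ and $V(H)$. This ``universal padding clique'' has the property that any subset of $K$ can be appended to any clique of $H$ without breaking the clique property, while the forced equal part-size in $H'$ is $2^l/4$.

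Correctness is checked in both directions. If $H$ has a partition $C_1,\ldots,C_4$ into cliques of sizes $n_1,\ldots,n_4$, then since $n_i\leq n\leq 2^l/4$ we may partition $K$ arbitrarily into $K_1,\ldots,K_4$ with $|K_i|=2^l/4-n_i$ (these sizes are nonnegative and sum to $p$), and set $V_i=C_i\cup K_i$; each $V_i$ is a clique in $H'$ of size exactly $2^l/4$. Conversely, from any equal $4$-clique cover $V_1,\ldots,V_4$ of $H'$, the restrictions $C_i=V_i\cap V(H)$ automatically form a partition of $V(H)$ into four cliques of $H$ (of possibly unequal sizes, which is acceptable for $4$-Clique Partition).

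The only mildly delicate point is enforcing the $n=2^l$ constraint, and the padding $K$ handles it cleanly: because its $p=2^l-n$ vertices are universally adjacent to $V(H)$ and to each other, they impose no combinatorial constraint on which part they join and can be distributed so as to equalize class sizes exactly. Since the construction is polynomial and $4$-Clique Partition is NP-hard, the lemma follows.
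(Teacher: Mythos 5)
Your proof is correct, and it is worth noting that the paper itself supplies no argument for this lemma at all---its ``proof'' is a dangling, empty citation---so there is no authorial proof to compare against; your reduction genuinely fills the gap. The route you take is the natural one: NP membership is routine, NP-hardness comes from $4$-Clique Partition (equivalently, $4$-colorability of the complement, NP-hard by Karp), and the universal padding clique $K$ of size $2^l-n$ with $2^l\ge 4n$ does exactly the two jobs required: it forces the order to be a power of two as the lemma demands, and it absorbs the size imbalance so that ``at most four cliques of arbitrary sizes'' in $H$ becomes ``exactly four cliques of size $2^l/4$'' in $H'$. Both directions check out: $n_i\le n\le 2^l/4$ guarantees the $K_i$ have nonnegative sizes summing to $|K|$, and in the reverse direction $H'[V(H)]=H$ ensures each $V_i\cap V(H)$ is a clique of $H$, giving a cover by at most four cliques. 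The only pedantic points you might make explicit are that $2^l/4=2^{l-2}$ is an integer because $2^l\ge 4n\ge 4$ forces $l\ge 2$, and that an ``at most four'' partition of $H$ is completed to exactly four parts by allowing empty $C_i$; neither affects correctness. One could quibble that the blow-up multiplies the instance size by a constant factor at most $8$ in vertex count (and quadratically in edges), so the reduction is clearly polynomial.
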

\noindent
For proof you can refer to~\cite{}.

\begin{theorem}
\label{thr:multi-graph-tree-length-np-hard}
Min Tree Length problem is NP-hard for the class of multi-graphs.
\end{theorem}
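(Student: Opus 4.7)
The plan is to polynomially reduce from Equal Size $4$-Clique Cover---shown NP-complete on graphs of order $n = 2^l$ in Lemma~\ref{lemma:equal-4-clique-cover}---to Min Tree Length on multi-graphs. Given an instance $G$ with $|V(G)| = n = 2^l$, the first step is to construct the multi-graph $G' = M\cdot K_n + G$ on the same vertex set: every unordered pair receives $M$ parallel copies, plus one additional copy for each edge of $G$, where $M$ is a sufficiently large polynomial in $n$ (say $M = n^5$). Since
\[
L(T,\varphi,G') \;=\; M\cdot L(T,\varphi,K_n) \;+\; L(T,\varphi,G)
\]
and $L(T,\varphi,G) = O(n^3) \ll M$, any minimizer of $L(T,\varphi,G')$ must also minimize $L(T,\varphi,K_n)$. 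By Corollary~\ref{coral:opt-tree-layout-complete-graph} this forces the optimal layout tree $T$ (up to isomorphism) to lie in the family $\mathfrak{T}(3,3)$ on $n$ leaves.

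For $n = 2^l$ with $l \ge 2$, the tree $T \in \mathfrak{T}(3,3)$ with $n$ leaves is a balanced cubic tree admitting a recursive $4$-equal-quadrant decomposition: removing a central edge splits it into two halves of $n/2$ leaves each, each of which splits into two quarters $Q_1,\dots,Q_4$ of $n/4$ leaves, and so on. Within this fixed tree, the residual optimization is the bijection $\varphi : V(G) \to V_L(T)$, and the key observation is that every within-quadrant tree-distance is strictly smaller than every between-quadrant tree-distance. The forward direction of the reduction is then immediate: if $G$ admits an equal $4$-clique cover $V = V_1\sqcup\dots\sqcup V_4$, place each $V_i$ into the quadrant $Q_i$ and recursively apply the same scheme inside each quadrant. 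Because $G[V_i] = K_{n/4}$, the within-quadrant cost attains the optimum $\sigma_{LL}$ of the $\mathfrak{T}(3,3)$ layout on $n/4$ leaves; summing across quadrants and adding the between-clique contribution (which, over assignments $V_i \mapsto Q_i$, can be minimised in polynomial time since there are only $4!$ such assignments) gives an explicit value $K^*$ with $L(T,\varphi,G) \le K^*$, which defines the threshold of the reduction.

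The main obstacle is the converse: showing that if $L(T,\varphi,G) \le K^*$ for some $T \in \mathfrak{T}(3,3)$ and bijection $\varphi$, then the preimages $V_i := \varphi^{-1}(Q_i)$ of the four quadrants form an equal $4$-clique cover of $G$. The planned argument is a local exchange in the spirit of those in Lemma~\ref{lemma:same-height-opt-if-in-family} and Theorem~\ref{thr:opt-LL-tree}: if some $V_i$ is not a clique, then a within-quadrant pair $\{u,v\}$ is a non-edge of $G$, while a pigeonhole on $|E(G)| \ge 4\binom{n/4}{2}$ forces some between-quadrant pair $\{u',v'\}$ to be an edge. Swapping the vertices at the corresponding leaves strictly decreases the total dilation, since the maximum within-quadrant tree-distance is strictly smaller than the minimum between-quadrant tree-distance; iterating such swaps until no improvement is possible, and recursing inside each quadrant, produces the desired equal $4$-clique cover of $G$. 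The delicate step is executing the exchange uniformly across all recursive levels of the quadrant decomposition and verifying that at every scale the within-vs.-between gap strictly exceeds the slack in the hypothesis $L(T,\varphi,G) \le K^*$; a careful case analysis on where the swapped vertices sit in the recursive structure is expected to dominate the technical content of the proof.
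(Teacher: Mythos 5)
Your reduction is the same as the paper's: start from Equal Size $4$-Clique Cover on graphs of order $2^l$, blow $G$ up with a polynomially large number of parallel edges so that the complete-multigraph part dominates, invoke Corollary~\ref{coral:opt-tree-layout-complete-graph} to force the optimal layout tree into $\mathfrak{T}(3,3)$, and then read the clique cover off the four equal quadrants obtained by deleting the (two) central nodes. Up to that point you match the paper (which uses $M=m(2n-2)$ where you use $n^5$; either works). One small remark: the recursion into sub-quadrants is unnecessary — once $G[V_i]$ is a clique, every internal arrangement of $V_i$ inside its quadrant subtree contributes the same fixed amount $\sigma_{LL}$ of that subtree, so only the top-level partition matters.

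The converse direction is where your proposal has a genuine gap, and the specific mechanism you propose does not work. First, swapping the two vertices sitting at the leaves of a within-quadrant non-edge $\{u,v\}$ and a between-quadrant edge $\{u',v'\}$ changes the dilation of \emph{every} edge of $G$ incident to the swapped vertices, not just these two pairs; the inequality ``max within-quadrant distance $<$ min between-quadrant distance'' says nothing about the sign of that aggregate change, so the claimed strict decrease does not follow. Second, your threshold $K^*$ is computed from a $4$-clique cover of $G$, which is circular: on a no-instance there is no such cover and hence no well-defined $K^*$ to compare against; a valid Karp reduction needs a threshold computable from $G$ alone. The paper sidesteps the exchange argument entirely by passing to the complement: since $\sum_{\text{all pairs}}d_{\widetilde T}$ is fixed once $\widetilde T\in\mathfrak{T}(3,3)$ is fixed, $L(\widetilde T,\varphi,G)=\sigma_{LL}(\widetilde T)-L(\widetilde T,\varphi,\overline G)$, so minimizing the tree length of $G$ is maximizing the total dilation of the non-edges, and the claim becomes that this maximum is attained exactly when $\varphi$ places every non-edge across two distinct quadrants, which happens iff the four quadrant preimages are cliques. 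If you want to repair your write-up, adopt that complement formulation and then prove the extremal claim about where the non-edges must sit (being careful that between-quadrant distances take two distinct values, $2l-2$ for sibling quadrants and $2l-1$ for opposite ones, so the bound is not a one-line pigeonhole); as it stands, the swap-based converse would not go through.
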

\begin{proof}
The correctness of the theorem can be represented using a polynomial reduction form Equal Size $4$-Clique Cover.

\noindent
Consider an arbitrary graph $G$, as an instance input of Equal Size $4$-Clique Cover problem, where $|V(G)| = 2^l$ for some $l \in \mathbb{N}$.
Let $G^{\prime}$ be the multi-graph obtained from $G$ by introducing $M = m\times(2n-2)$ parallel edges
between every two vertices $u, v \in V(G)$.
Notice that every tree layout $T$ for graph $G$ has $2n-2$ edges where the congestion
of each edge is less than $m = |E(G)|$.
Considering graph $G$ as an vertex induced subgraph of $G^{\prime}$, then $G^{\prime} = G \uplus \widetilde{G}$, where $\widetilde{G}$
is another vertex induced subgraph of $G^{\prime}$. Subgraph $\widetilde{G}$ is complete multi-graph.
Hence for every tree layout $T$ and $\varphi: V(G) \rightarrow V_L(T)$ for $G^{\prime}$ we have:
\begin{align}
L(T, \varphi, G^{\prime}) = L(T,G) + L(T, \varphi, \widetilde{G})
\end{align}
\noindent
From corollary~\ref{coral:opt-tree-layout-complete-graph} we know that a layout tree $\widetilde{T}$ for $\widetilde{G}$
is optimal iff $\widetilde{T} \in \mathfrak{T}(3,3)$. Also for every layout tree $T \notin \mathfrak{T}(3,3)$, it is the case that
$L(T, _ , \widetilde{G}) \ge  L(\widetilde{T}, _ , \widetilde{G}) + M$.
On the other hand for $n > 2$ and every layout tree $T$ where $|V_L(T)| = n$, it is always the case that $L(T,G) < M$.

\noindent
Therefore, a layout tree $T$ for $G^{\prime}$ is optimal iff
$T$ and $\widetilde{T}$ are isomorphic.
In other words $T$ for $G^{\prime}$ is optimal iff $T \in \mathfrak{T}(3,3)$.
Hence the Min Tree Length problem for $G^{\prime}$ reduces to the problem of finding an optimal bijection $\varphi$ form
vertices of $G$ to leaf nodes of $\widetilde{T} \in \mathfrak{T}(3,3)$, such that the summation of
edge dilations for all $\Set{u,v} \in E(G)$ is minimized. Formally speaking:
\begin{align}
\argmin_{(T,\varphi)} L(T, \varphi, G^{\prime}) = (\widetilde{T}, \argmin_{\varphi} L(\widetilde{T}, \varphi, G) )
\end{align}

\noindent
Let $\overline{G}$ denote the complement of graph $G$. One can easily check that:
\begin{align}
L(\widetilde{T}, \varphi, G) = L(\widetilde{T}, \varphi, K_n) - L(\widetilde{T}, \varphi, G)
\end{align}
\noindent
Where $K_n$ is a complete graph of size $n$ .Therefore:
\begin{align}
\argmin_{\varphi} L(\widetilde{T}, \varphi, G) = \argmax_{\varphi} L(\widetilde{T}, \varphi, G) = \argmax_{\varphi} \sum_{\Set{u,v} \in E(\overline{G})} \lambda(uv,\widetilde{T},\varphi,G)
\end{align}

\noindent
Also based on the structure of $\widetilde{T}$, $\exists c_1, c_2 \in \mathcal{C}_{\widetilde{T}}$ (in figure~\ref{fig:opt-layout-tree-for-complete} shown by double lined circles). Since $n = 2^l$ for $l \in \mathbb{N}$, removing central nodes $c_1$ and $c_2$ partition leaf nodes of $\widetilde{T}$ into 4 equally sized
partitions $L_1, \ldots,L_4$.

\noindent
Finally, we can deduce that the original graph $G$ can be partitioned into 4 complete sub-graphs of size $\dfrac{n}{4}$, iff there exist
bijection $\varphi$ such that $\forall \Set{u,v} \in E(\overline{G}), \varphi(u) \in L_i \wedge \varphi(v) \in L_j$ for $1 \le i \neq j \le 4$.
In other words, graph $G$ can be partitioned into 4 complete sub-graphs $G_1,\ldots,G_4$ of size $\dfrac{n}{4}$,
iff $(\widetilde{T}, \varphi)$ is a solution for Min Tree Length of $G^{\prime}$, where
$\varphi$ maps vertices of $G_i$ to leaf nodes of $L_i$ for $1 \le i \neq j \le 4$.
Which infers the NP-hardness of Min Tree Length problem for the multi-graphs.
\end{proof}

\subsection{Min Tree Length of Simple Graphs}
\label{sec:simple-graph-optimal-layout}
Finite graph $G^{\prime}$ is simple, if for $u \neq v \in V(G^{\prime})$ there is at most one edge $\Set{u,v} \in E(G^{\prime})$
and $\forall u \in V(G), \Set{u,u} \notin E(G)$. Consider complete multi-graph $G$, where
every two vertices $u$ and $v$ are connected via $l$ parallel edges.
Having multi-graph $G$ one can obtain a simple graph $G^{\prime}$ by subdividing very edge $\Set{u,v} \in E(G)$
and introducing a new vertex $x$ of degree $2$.

\noindent
Consider a tree layout $T^{\prime}$ and bijective mapping $\varphi^{\prime}: V(G^{\prime}) \rightarrow V_L(T^{\prime})$ for the simple graph $G^{\prime}$.
For every $x \in V(G^{\prime})$, $\varphi^{\prime}(x) \in V_L(T^{\prime})$ is directly connected to some internal node $w \in V_I(T^{\prime})$.
Removing $w$ results in the set of three subtree $\Set{T^{\prime}_{w,1}, \varphi^{\prime}(u), T^{\prime}_{w,2}}$.
For every $x \in V(G^{\prime})$, by $\mathcal{T}^{\prime}_1(x)$ and $\mathcal{T}^{\prime}_2(x)$
we refer respectively to subtree $T^{\prime}_{w,1}$ and $T^{\prime}_{w,2}$, where $w$ is the direct neighbour of $\varphi^{\prime}(x)$ in $T^{\prime}$.
It is easy to see that $\sigma(\Set{\varphi^{\prime}(x),w}, T^{\prime}, \varphi^{\prime}, G^{\prime})$ is equal to the degree of $x$ in $G$, and also if
$x$ is of degree $2$ then the congestion of the two edges connecting $w$ to $T^{\prime}_{w,1}$ and $T^{\prime}_{w,2}$ is equal iff
$\varphi^{\prime}(u) \in V(T^{\prime}_{w,1})$ and $\varphi^{\prime}(v) \in T^{\prime}_{w,2}$, where $u$ and $v$ are the direct neighbors of $x$ in $G$.

\noindent
Consider tree layout $T$ and bijective mapping $\varphi: V(G) \rightarrow V_L(T)$ for the multi-graph $G$.
Starting from tree $T$, we can constructed a new tree $T^{\prime}$ for the simple graph $G^{\prime}$ by subdividing some edge of $T$
and introducing sub-tree layouts containing only vertices of degree $2$.
In other words, $T^{\prime}$ is constructed from $T$ by introducing $m=|E(G)|$ internal node and $m$ leaf nodes (each corresponding to one vertex of $G^{\prime}$ of degree $2$).
Figures ~\ref{fig:complete-multi-graph} and~\ref{fig:complete-multi-graph-tree} respectively depict the
multi-graph $G$ and its corresponding tree layout $T$, while
in figures~\ref{fig:complete-simple-graph} and~\ref{fig:complete-simple-graph-tree} you can see simple graph $G^{\prime}$ obtained from $G$ and one possible tree layout $T^{\prime}$ for simple graph $G^{\prime}$ constructed from
$T$.
\begin{figure}[h]
        \centering
        \begin{subfigure}[b]{0.3\textwidth}
               \includegraphics[scale=.7]
                {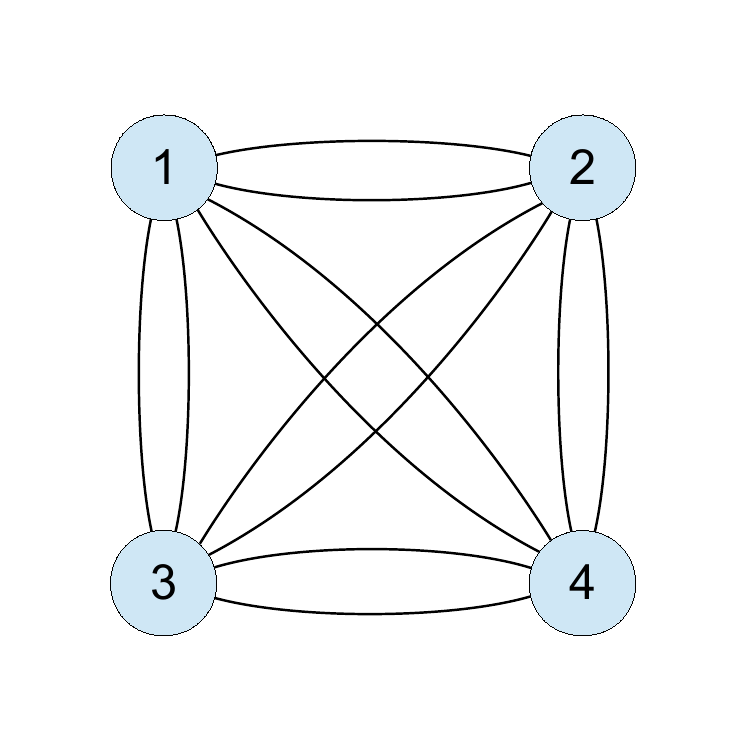}
                \caption{Complete multi-graph $G$ where every two vertices are connected via $l = 2$ parallel edges.}
                 \label{fig:complete-multi-graph}
        \end{subfigure}
        \qquad
        \begin{subfigure}[b]{0.3\textwidth}
                \includegraphics[scale=.7]
                {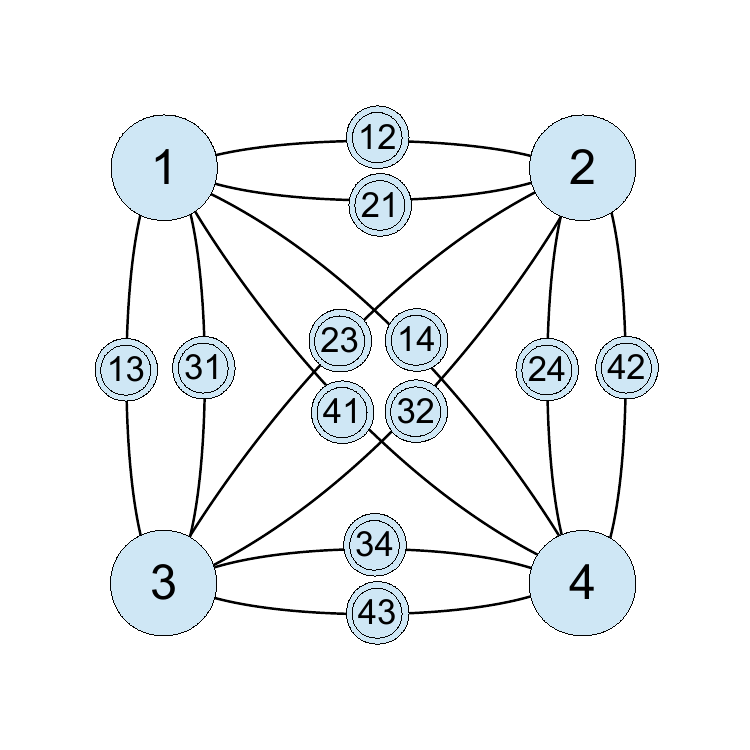}
                \caption{Simple graph $G^{\prime}$ obtained from $G$.}
                \label{fig:complete-simple-graph}
        \end{subfigure}
        \caption{Simple graph $G^{\prime}$ obtained from $G$ by subdividing every edge of $G$ and introducing a vertex of degree $2$.}
\end{figure}
\begin{figure}[h]
        \centering
        \begin{subfigure}[b]{0.3\textwidth}
                \includegraphics[scale=.7]
                {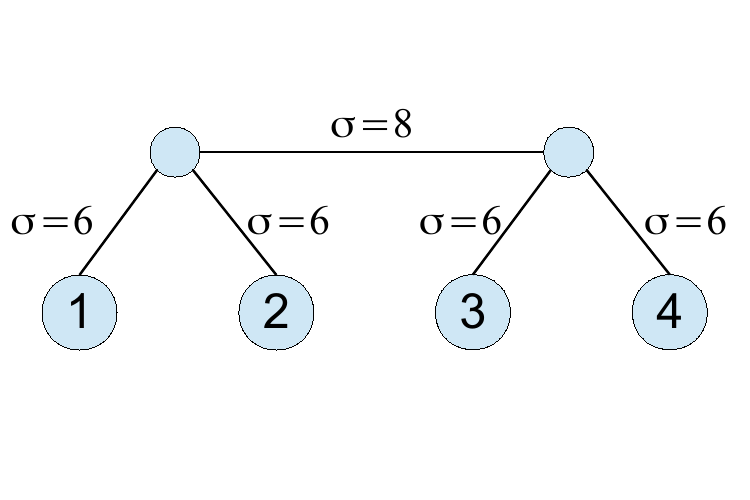}
                \caption{An arbitrary tree layout $T$ for $G$.}
                \label{fig:complete-multi-graph-tree}
        \end{subfigure}
        \qquad\\
        \begin{subfigure}[b]{0.6\textwidth}
                \includegraphics[scale=.7]
                {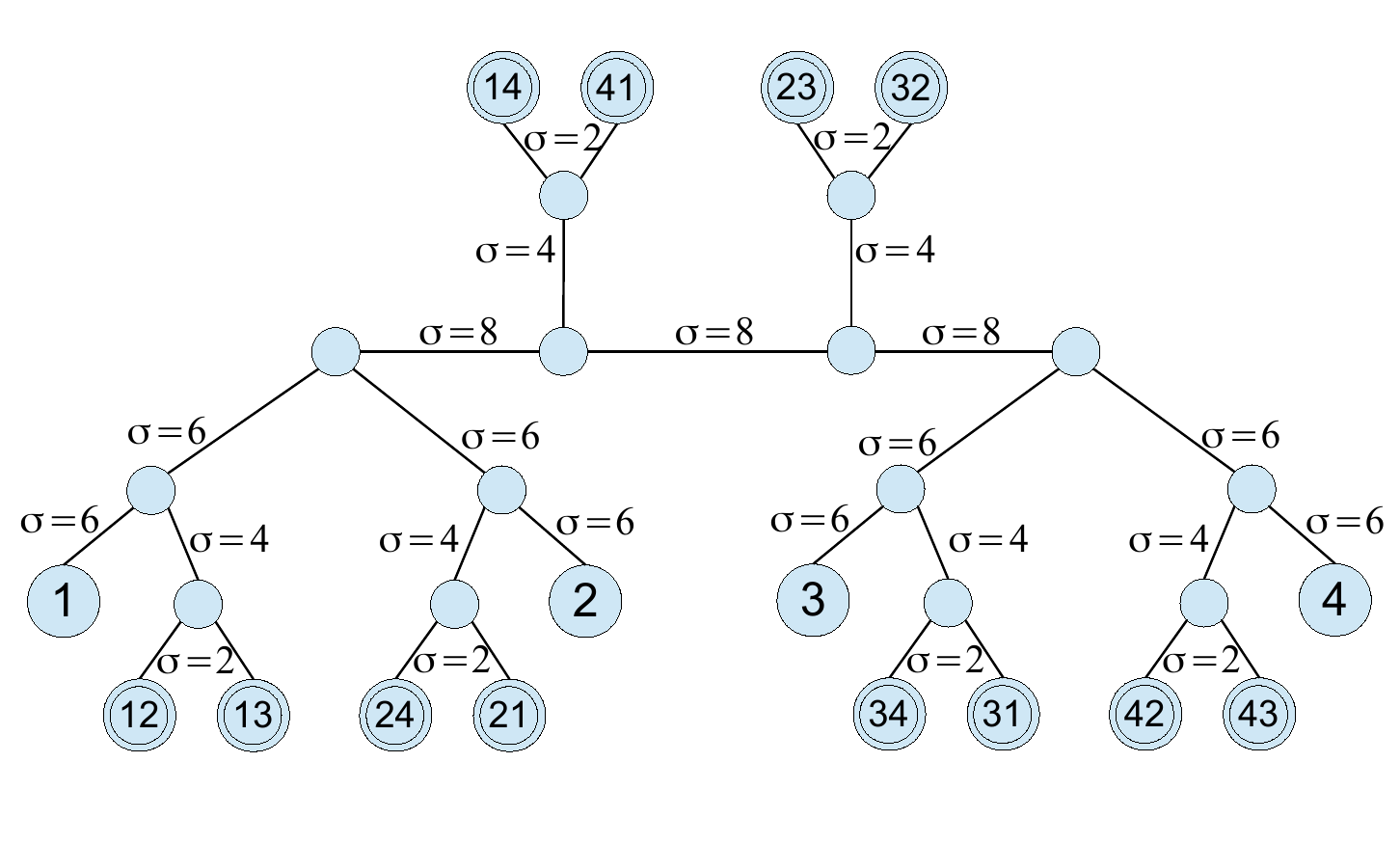}
                \caption{One possible tree layout $T^{\prime}$ for $G^{\prime}$, constructed from $T$. As you can see,
                $T^{\prime}$ is obtained by subdividing some edges of tree $T$ (possibly several times) and introducing sub-tree layouts containing only vertices of degree $2$.}
                \label{fig:complete-simple-graph-tree}
        \end{subfigure}
        \caption{Tree layout $T^{\prime}$ for simple graph $G^{\prime}$, constructed from initial tree layout $T$. The congestion of
        every edge is include by that edge.}
\end{figure}
\begin{lemma}
\label{lemma-simple-graph-optimal}
Let $G$ be a complete multi-graph where $\forall u,v \in V(G)$ there exist $l$ parallel edge $\Set{u,v} \in E(G)$.
Let $T$ and $\varphi$ be arbitrary tree layout and the corresponding mapping for $G$. Also assume $G^{\prime}$ is the simple graph obtained by subdividing every edge of $G$.
We show the class of all possible layout trees for $G^{\prime}$, constructed from $T$ by $\mathfrak{F}(T)$.
Consider tree layout $T^{\prime} \in \mathfrak{F}(T)$, where $\forall T^{\prime\prime} \in \mathfrak{F}(T), LA(T^{\prime}, G) \le LA(T^{\prime\prime}, G)$.
Then it is the case that $T^{\prime}$ is constructed by subdividing only edges of $T_0$ that each one is adjacent to a leaf node (which we call external edges).
\end{lemma}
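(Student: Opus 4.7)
The plan is to write $L(T', \varphi', G')$ in a form that cleanly separates the cost attributable to each subdivision of $T$ from the cost attributable to the placement of each new pendant, and then to apply a simple exchange argument. First, I would observe that every $G$-edge $\{u, v\}$ contributes to $L(T', \varphi', G')$ via the two $G'$-edges incident to the subdividing pendant $x_{uv}$, so the contribution is the $T'$-distance from $\varphi(u)$ to $\varphi(v)$ routed through $\varphi'(x_{uv})$. Expanding $d_{T'}(\varphi(u),\varphi(v)) = d_T(\varphi(u),\varphi(v)) + \sum_{e \in P_T(\varphi(u),\varphi(v))} k_e$, where $k_e$ counts the pendants attached to subdivisions of edge $e$, and accounting for the unit pendant edge plus any detour off the $\varphi(u)$-$\varphi(v)$ path, I expect to obtain
\begin{align*}
L(T', \varphi', G') \;=\; L(T, \varphi, G) \;+\; 2m \;+\; \sum_{e \in E(T)} k_e\, c_e \;+\; 2\!\!\sum_{\{u,v\} \in E(G)}\!\! d_{T'}\bigl(\varphi'(x_{uv}),\, P_{T'}(\varphi(u),\varphi(v))\bigr),
\end{align*}
where $c_e$ is the congestion of $e$ in the original layout $(T, \varphi)$ for $G$. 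The first two summands are constants over $\mathfrak{F}(T)$, so minimizing $L(T', \varphi', G')$ reduces to simultaneously minimizing $\sum_e k_e c_e$ and driving each detour distance to zero, which occurs exactly when each pendant $x_{uv}$ is attached via a subdivision of some edge lying on the $\varphi(u)$-$\varphi(v)$ path in $T$.

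The main structural input is that in a complete multigraph with edge multiplicity $l$, any internal edge of $T$ has congestion $l \cdot |S_1| \cdot |S_2|$ with $|S_1|, |S_2| \ge 2$, while any external edge has congestion $l(n-1)$; so for $n \ge 4$ we have $2(n-2) > n-1$, and every internal edge strictly dominates every external edge in congestion. Moreover, the external edges incident to $\varphi(u)$ and to $\varphi(v)$ both lie on the $\varphi(u)$-$\varphi(v)$ path in $T$, so any pendant $x_{uv}$ can be attached via an external subdivision with zero detour. Putting these together: if $T'$ subdivides an internal edge $e$ to attach some pendant $x_{uv}$, then moving that subdivision to the external edge at $\varphi(u)$ strictly decreases $\sum_e k_e c_e$ by $c_e - c_{\text{ext}} > 0$ while keeping the detour sum at zero, contradicting the optimality of $T'$ within $\mathfrak{F}(T)$.

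The hard part will be justifying the length formula carefully when multiple pendants are attached to the same edge of $T$, so that the chain of subdivisions still yields the claimed decomposition and the detour for a pendant attached at the $j$-th subdivision of $e$ depends only on whether $e$ itself lies on the relevant leaf-to-leaf path in $T$, not on the order in which subdivisions appear along $e$. Once this bookkeeping is in place, the exchange step follows immediately from the strict congestion gap between internal and external edges, and iterating it on any violating pendant produces a strictly smaller member of $\mathfrak{F}(T)$.
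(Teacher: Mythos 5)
Your proposal is correct and follows essentially the same route as the paper's proof: both rest on the congestion gap between internal edges (congestion $\ge 2l(n-2)$) and external edges (congestion exactly $l(n-1)$), combined with an exchange step that relocates any subdivision sitting on an internal edge to an external edge incident to a leaf that is a $G'$-neighbor of the relocated pendant, so that the move strictly decreases the length by at least $2l(n-2)-l(n-1)=l(n-3)>0$. Your explicit decomposition of $L(T',\varphi',G')$ into the base length, the $\sum_e k_e c_e$ term, and the detour term is a more carefully bookkept version of the difference computation the paper states directly.
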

\begin{proof}
We know that for every edge $e \in E(T), \sigma(e, T, G) \ge l\times(n-1)$ where $n = V(G)$ and equality holds only if
$e$ is adjacent to a leaf node $w \in V_L(T)$.
Now consider an arbitrary $T^{\prime\prime} \in \mathfrak{F}(T)$, obtained by subdividing at least one \emph{internal} edge $e \in E(T)$\footnote{
Edge $e \in E(T)$ is external if it is adjacent to a leaf node of $T$, and internal otherwise. Let $E_I(T)$ and $E_E(T)$ respectively represent the
set of internal and external edges of $T$}.
It is easy to check that $\forall e\in E_I(T), \sigma(e, T, G) \ge l \times 2 \times (n-2)$. As appose to tree layout $T^{\prime\prime}$ for $G^{\prime}$, we suggest tree layout $T^{\prime}$ constructed by relocating the subtree (possibly more than one subtree) that subdivides $e$ (and hence removing all the subdivisions of $e$) and
creating a new subdivision (possibly more than one subdivision) in a an external edge $\Set{w,w^{\prime}}$ (assume $w \in V_L(T)$).
We choose an external edge $\Set{w,w^{\prime}}$ such that for the every leaf node $\overline{x}$ of the relocated subtree,
$\Set{\varphi^{-1}(\overline{x}), \varphi^{-1}(w)} \in E(G^{\prime})$.
Hence based on the construction of $T^{\prime}$ form $T^{\prime\prime}$,
the following equation holds, which in turn evidences the correctness of the lemma.
\begin{align*}
LA(T^{\prime\prime}, G^{\prime}) - LA(T^{\prime}, G^{\prime}) > ( l \times 2 \times (n-2)) - (l \times (n-1)) > 0
\end{align*}
\end{proof}
\noindent
From the result of lemma~\ref{lemma-simple-graph-optimal} one can infer the fact that
given layout tree $T$ and mapping $\varphi$ for complete multi-graph $G$, an optimal tree layout (based on $T$) for simple graph $G^{\prime}$
can be constructed by subdividing only external edges of $T$
and introducing sub-tree layouts corresponding to the new vertices of degree $2$. But it does not
provide any information regarding the exact structure of the optimal tree.
In what follows and without providing all the details of the proof,
we present the structure of the optimal layout tree for $G^{\prime}$, constructed from $T$.
Note that for the sake of the main theorem in this subsection, we do not need to know the 
exact structure of the three layout with minimum tree length.

\noindent
The simple graph $G^{\prime}$ contains exactly $l\times \dfrac{n(n-1)}{2}$ vertices of degree $2$.
Every vertex $v \in G^{\prime}$ of degree $l\times (n-1)$ is directly connected to $l\times (n-1)$ vertices
of degree $2$.
The optimal tree layout $T^{\prime}$ obtains from $T$, by subdividing every external edge $\Set{w,w^{\prime}} \in E_E(T)$ (where $w \in V_L(T)$)
exactly once with a sub-tree layout containing $\dfrac{l}{2}\times (n-1)$  leaf nodes\footnote{For the sake of the main theorem in this section, we assume $l$ is an even integer.}. Every leaf node of this subtree
correspond to one vertex $x \in V(G^{\prime})$ with degree $2$ where $x$ is directly connected to $w$.

\noindent
Let $\mathcal{T}$ be an subtree of tree layout $T^{\prime}$ where its leaf nodes correspond to only vertices of degree $2$ in $G^{\prime}$. Since no two distinct vertices $x,y \in V(G^{\prime})$ of degree $2$ are neighbors,
the summation of congestions of edges of a subtree $\mathcal{T}$ (with a fixed number of nodes) is minimum only when
$\mathcal{T}$ is a complete rooted binary tree with no node of degree $2$. The root node of $\mathcal{T}$ is directly connected to the
internal node subdividing edge $\Set{w,w^{\prime}} \in E_E(T)$.
In the suggested optimal tree $T^{\prime}$ constructed from $T$, for every newly introduced edge $e$,
we have $\sigma(T^{\prime}, G^{\prime}) \le l \times (n-1)$.

\noindent
Figure~\ref{fig:complete-simple-graph-opt-tree} depicts the structure of an optimal layout tree for $G^{\prime}$
constructed from the initial layout tree $T$.
\begin{figure}[h]
        \centering
        \includegraphics[scale=.7]
        {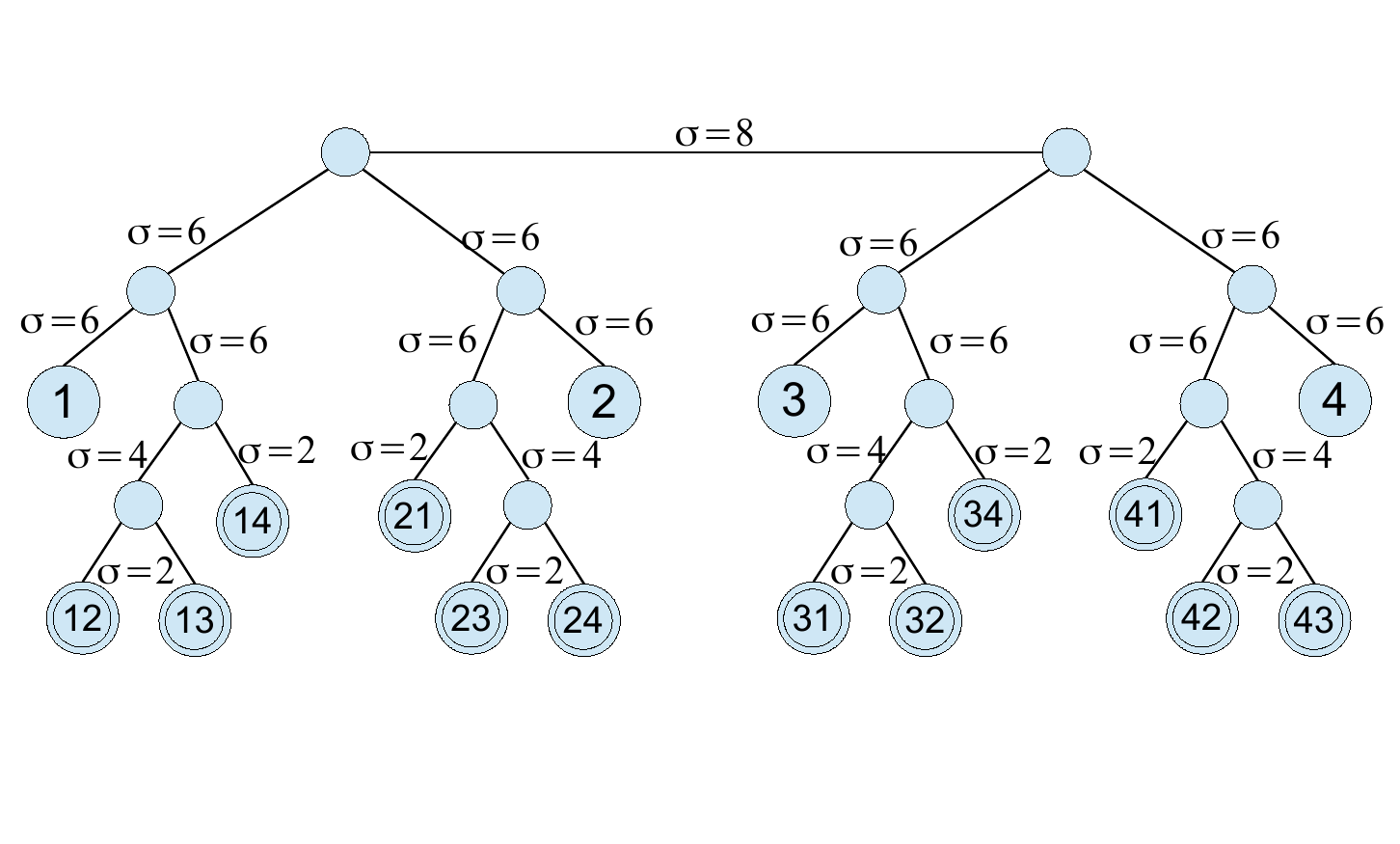}
        \caption{Optimal tree layout $T^{\prime}$ for $G^{\prime}$, constructed from $T$.}
        \label{fig:complete-simple-graph-opt-tree}
\end{figure}
\noindent
Accordingly the value of optimal layout $T^{\prime}$ for $G^{\prime}$, based on the initial tree layout $T$ for $G$, is equal to:
\begin{align*}
    LA(T^{\prime}, G^{\prime}) = LA(T, G) + n\times(\mathbb{T} + l\times(n-1))
\end{align*}
\noindent
Where the constant $\mathbb{T}$ is the summation of all edges' congestion of an optimal subtree, containing $\dfrac{l}{2}\times (n-1)$ leaf nodes, such that every leaf node correspond to a vertex of degree $2$ in $G^{\prime}$. Also constant $l\times(n-1)$ is the congestion of every external edge in $T$.
As you can see, the constant part of this equation does not depend on
the structure of the initial tree layout $T$. Hence consider two tree layouts $T^{\prime}$ and $T^{\prime\prime}$ for $G^{\prime}$,
respectively optimally obtained from tree layouts $T^{\prime}_0$ and $T^{\prime\prime}_0$ for complete multi-graph $G$.
Then $LA(T^{\prime}, G^{\prime}) < LA(T^{\prime\prime}, G^{\prime})$ iff $LA(T^{\prime}_0, G) < LA(T^{\prime\prime}_0, G)$.
\begin{corollary}
\label{col:complete-simple-opt-tree}
Let $T \in \mathfrak{T}(3,3)$ and $\varphi: V(G) \rightarrow V_L(T)$ be the solution of the Min Tree Length problem for the complete multi-graph $G$,
where every two distinct vertices are connected via $l$ parallel edges.
Assume $G^{\prime}$ is the simple graph obtained from $G$ by subdividing every edge with a vertex of degree $2$.
Also, let layout $T^{\prime}$ and bijective mapping $\varphi^{\prime}: V(G^{\prime}) \rightarrow V_L(T^{\prime})$ be the optimal solution of Min Tree Length problem for graph $G^{\prime}$. Then it is the case that:
\begin{itemize}
  \item $\forall v \in V(G^{\prime}) \cap V(G), \varphi^{\prime}(v) = \varphi(v)$
  \item $T^{\prime}$ is constructed from $T$ by subdividing every
        external edge $\Set{w,w^{\prime}}$ of $T$ using sub-tree layout $\mathcal{T}_w$, containing $\dfrac{l}{2}\times (n-1)$  leaf nodes, where
  \item for every leaf node $\overline{x} \in V_L(\mathcal{T}_w)$, ${\varphi^{\prime}}^{-1}(\overline{x}) = x \in V(G^{\prime})$, where $x$ is directly connected to ${\varphi^{\prime}}^{-1}(w)$ in $G^{\prime}$.
\end{itemize}
\end{corollary}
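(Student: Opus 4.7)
The plan is to reduce the Min Tree Length problem for $G'$ to the already-solved problem for the complete multi-graph $G$, by showing that the tree length of any optimal $(T', \varphi')$ for $G'$ splits into $L(T, \varphi|_{V(G)}, G)$ plus a constant that depends only on $n$ and $l$. Here $T \subset T'$ is the \emph{backbone}, the minimal subtree of $T'$ spanning $\varphi'(V(G))$.

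First, applying Lemma~\ref{lemma-simple-graph-optimal} locally, I conclude that every maximal subtree of $T'$ consisting only of images of degree-$2$ vertices is attached to $T$ by subdividing an \emph{external} edge of $T$ --- subdividing any internal edge with such an appendage strictly increases the cost. Since no two degree-$2$ vertices of $G'$ are adjacent, the internal edges of each appendage carry no bypass traffic, so the sum of their congestions depends only on the appendage's order; among all binary trees of fixed order the complete rooted binary tree minimizes this sum, which pins down the shape of each appendage.

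Second, I would use a counting and exchange argument to identify the size and content of each appendage. The total number of subdividers is $l\binom{n}{2} = n \cdot \frac{l(n-1)}{2}$, and any imbalance between appendages admits a strict improvement by shifting a subdivider across the shared external edge. Likewise, a subdivider $x$ of edge $\{u,v\} \in E(G)$ strictly prefers the appendage at $\varphi'(u)$ or $\varphi'(v)$ to any other location, because detours increase the $\varphi'(u)$-to-$\varphi'(v)$ path length. Together these force the appendage at $\varphi'(v)$ to contain exactly the $\frac{l}{2}(n-1)$ subdividers of edges of $G$ incident to $v$.

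With the structure pinned down, the total cost decomposes as
\begin{align*}
L(T', \varphi', G') = L(T, \varphi|_{V(G)}, G) + n\bigl(\mathbb{T} + l(n-1)\bigr),
\end{align*}
where $\mathbb{T}$ is the fixed internal congestion sum of one appendage, independent of $T$ and $\varphi$. Hence $(T', \varphi')$ is optimal for $G'$ iff $(T, \varphi|_{V(G)})$ is optimal for $G$, and Corollary~\ref{coral:opt-tree-layout-complete-graph} forces $T \in \mathfrak{T}(3,3)$ together with $\varphi'|_{V(G)} = \varphi$. The main obstacle is the second step: establishing balance and local assignment \emph{simultaneously}. A clean route is induction on the number of "misplaced" subdividers, using the fact that every degree-$2$ vertex of $G'$ has exactly two neighbors, both backbone vertices, so the notion of local swap is unambiguous and every deviation from the claimed structure admits a strictly improving exchange.
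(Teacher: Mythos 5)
Your proposal follows essentially the same route as the paper: Lemma~\ref{lemma-simple-graph-optimal} to restrict subdivisions to external edges, an exchange/counting argument to balance the appendages and assign each subdivider locally, and the additive decomposition $L(T',\varphi',G') = L(T,\varphi,G) + n(\mathbb{T}+l(n-1))$ to transfer optimality back to $G$ via Corollary~\ref{coral:opt-tree-layout-complete-graph}. The paper itself leaves the balance-and-local-assignment step as an informal sketch ("without providing all the details"), so your explicit exchange argument is, if anything, a more careful rendering of the same proof.
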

\noindent
Facilitating the result of corollary~\ref{col:complete-simple-opt-tree}, we conclude this section by showing that
the Min Tree Length problem stays NP-hard even for the class of simple graphs.
\begin{theorem}
\label{thr:simple-graph-tree-length-np-hard}
Min Tree Length problem is NP-hard for the class of simple graphs.
\end{theorem}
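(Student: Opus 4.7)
The plan is to combine the reduction of Theorem~\ref{thr:multi-graph-tree-length-np-hard} with the edge-subdivision construction analyzed in Lemma~\ref{lemma-simple-graph-optimal} and Corollary~\ref{col:complete-simple-opt-tree}. Given an input graph $G$ with $|V(G)| = n = 2^l$ to the Equal Size $4$-Clique Cover problem (Lemma~\ref{lemma:equal-4-clique-cover}), I first build the multi-graph $G^{\prime}$ used in the proof of Theorem~\ref{thr:multi-graph-tree-length-np-hard}, adjusting the construction so that the number of parallel edges $L$ between every pair of distinct vertices is an even integer dominating $m$; concretely one can take $L = 2m(2n-2)$. Then I obtain a simple graph $G^{\prime\prime}$ by subdividing every edge of $G^{\prime}$ exactly once, introducing one new vertex of degree $2$ per parallel edge. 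The overall construction is polynomial in the size of $G$.

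The core of the argument is to show that an optimal tree layout for $G^{\prime\prime}$ is forced to have the canonical structure of Corollary~\ref{col:complete-simple-opt-tree}: a skeleton $T \in \mathfrak{T}(3,3)$ over the $n$ original vertices of $G$, in which every external edge has been subdivided by the same balanced binary subtree that depends only on $n$ and $L$. The corollary is stated for a strictly complete multi-graph, but its proof strategy extends to $G^{\prime}$: the extra contribution of the $m$ original edges of $G$ layered on top of the $L$-fold complete multi-graph adds at most $m(2n-2)$ to any layout's tree length, which by the choice of $L$ is strictly smaller than the gap (of order $L$) by which any non-$\mathfrak{T}(3,3)$ skeleton or any non-canonical subdivision is suboptimal for the complete multi-graph portion alone. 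Hence both the skeleton and the subdivisions are pinned down.

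Once the skeleton is fixed in $\mathfrak{T}(3,3)$ and the subdivision pattern is fixed, the only remaining freedom in an optimal pair $(T^{\prime\prime}, \varphi^{\prime\prime})$ is the bijection of the $n$ vertices of $G$ onto the distinguished $n$ leaves of $T$. The complete multi-graph component contributes a constant to the total tree length independent of this bijection, so minimizing over it reduces to minimizing $\sum_{\Set{u,v}\in E(G)} \lambda(uv, T, \varphi, G)$. By the same calculation used in Theorem~\ref{thr:multi-graph-tree-length-np-hard}, this is equivalent to maximizing the total dilation over $E(\overline{G})$, and since removing the two central nodes of $T \in \mathfrak{T}(3,3)$ partitions $V_L(T)$ into four blocks of size $n/4$, this maximum is attained precisely when $G$ admits an Equal Size $4$-Clique Cover, yielding the desired reduction.

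The main obstacle will be the rigorous justification that Corollary~\ref{col:complete-simple-opt-tree} extends to the ``almost complete'' multi-graph $G^{\prime}$: one must verify that the $m$ extra edges from $E(G)$ cannot conspire to beat the structural savings identified in its proof, either at the skeleton level or at the subdivision level. This reduces to a two-level gap comparison, where the choice $L = 2m(2n-2)$ (or any similarly large even value) guarantees that every individual structural deviation from the canonical layout costs strictly more than all possible contributions from the extra edges of $G$ combined, mirroring the domination argument already used in the proof of Theorem~\ref{thr:multi-graph-tree-length-np-hard}.
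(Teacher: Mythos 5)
Your proposal follows essentially the same route as the paper: reduce from Equal Size $4$-Clique Cover by overlaying a dominating complete multi-graph, make it simple by subdividing the parallel edges, invoke Corollary~\ref{col:complete-simple-opt-tree} together with the domination-gap argument to pin the optimal layout to the canonical $\mathfrak{T}(3,3)$ skeleton with balanced subdivision subtrees, and then read off the clique cover from the four equal leaf blocks determined by the central nodes. The only differences are cosmetic: you make the evenness of the multiplicity explicit by taking $L=2m(2n-2)$ (the paper relegates this to a footnote), and you should be careful to subdivide only the newly introduced parallel edges rather than ``every edge of $G^{\prime}$,'' since your final dilation computation over $E(G)$ implicitly assumes the original edges of $G$ remain unsubdivided, exactly as in the paper's construction.
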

\begin{proof}
This theorem can be proven using a similar approach as we used in the proof
of theorem~\ref{thr:multi-graph-tree-length-np-hard} by a reduction form Equal Size $4$-Clique Cover problem.

\noindent
Hence given graph $G$, as an instance input of Equal Size $4$-Clique Cover problem, we construct
multi-graph, by introducing $M = m\times(2n-2)$ parallel edges
between every two vertices $u, v \in V(G)$.
In the next step we obtain a simple graph $G^{\prime}$ by subdividing every newly introduced edges.

\noindent
Considering graph $G$ as an vertex induced subgraph of $G^{\prime}$, then $G^{\prime} = G \uplus \widetilde{G}$, where $\widetilde{G}$
is also a vertex induced subgraph of $G^{\prime}$. $\widetilde{G}$ is the simple graph obtained from a complete multi-graph $G_0$, by subdividing every edge.
Hence for every tree layout $T$ and $\varphi: V(G) \rightarrow V_L(T)$ for $G^{\prime}$ we have:
\begin{align}
L(T, \varphi, G^{\prime}) = L(T,\varphi^{\prime}, G) + L(T, \varphi, \widetilde{G})
\end{align}
\noindent
Where $\varphi^{\prime}$ is partially defined from $\varphi$, in other words, $\varphi^{\prime}(v) = \varphi(v)$ for every $v \in V(G)$.

\noindent
From corollary~\ref{col:complete-simple-opt-tree} we know that a layout tree $\widetilde{T}$ for $\widetilde{G}$
is optimal iff $\widetilde{T}$ is optimally constructed from a tree layout $T_0 \in \mathfrak{T}(3,3)$ for $G_0$. Also for every layout tree $T^{\prime\prime}$ optimally constructed from some tree layout $T_i \notin \mathfrak{T}(3,3)$, it is the case that
$L(T^{\prime\prime}, _ , \widetilde{G}) \ge  L(\widetilde{T}, _ , \widetilde{G}) + M$.
On the other hand for $n > 2$ and every layout tree $T$ where $|V_L(T)| = n$, it is always the case that $L(T,_,G) < M$.

\noindent
Optimal tree $\widetilde{T}$ for $G^{\prime}$ has a similar structure to the structure of $T_0 \in \mathfrak{T}(3,3)$, in the sense that:
\begin{itemize}
  \item the leaf nodes can be partitioned into the 4 subtrees $\widetilde{T}_1,\ldots, \widetilde{T}_4$ of the same size and isomorphic structure,
  \item every subtree $\widetilde{T}_i$ contains $\dfrac{n}{4}$ leaf nodes, corresponding to the vertices of $G$, where
  \item $\forall w_1, w_2 \in \widetilde{T}_i, w_2 \in \widetilde{T}_j$ for $i \neq j$, where $\varphi^{-1}(w_1), \varphi^{-1}(w_2), \varphi^{-1}(w_3) \in V(G)$, it is the case that $d_{\widetilde{T}}(w_1,w_2) < d_{\widetilde{T}}(w_1,w_3)$. In other words,
      the distance of every two leaf nodes $w_1, w_2$ (corresponding to vertices of $G$) in the same subtree $\widetilde{T}_i$ is less than the distance of every two leaf nodes that belong to two distinct subtrees $\widetilde{T}_i$ and $\widetilde{T}_j$.
\end{itemize}
\noindent
Hence the Min Tree Length problem for $G^{\prime}$ reduces to the problem of finding an optimal bijection form
vertices of $G$ to the leaf nodes of $\widetilde{T}$ (that correspond to the vertices of $G$), such that the summation of
edge dilations for all $\Set{u,v} \in E(G)$ is minimized.
Therefore, similar the proof of theorem~\ref{lemma:equal-4-clique-cover} and omitting the details, it can be inferred that $G$ is an positive
instance of $4$-Clique Cover problem, iff using the optimal tree layout $\widetilde{T}$ for $G^{\prime}$, vertices of $G$
can be partitioned into $4$ complete graphs of size $\dfrac{n}{4}$, which indicates the NP-hardness of Min Tree Length problem for
the class of simple graphs.
\end{proof}

\section{Layout Tree Problem in Relation with Graph Reassembling}
\label{sect:reassembling}
  In this section we study the relation between tree layout problem and \emph{graph reassembling} problem as defined in~\cite{kfoury2015efficient}.
Graph reassembling problem plays a key role in the efficiency of main programs in earlier work on a domain-specific language (DSL) for the design of flow networks~\cite{BestKfoury:dsl11,Kfoury:sblp11,kfoury2013different}.

Consider a simple graph graph $G = (V,E)$ (not necessarily connected), partitioned into the set of $|V|$ one-vertex components
by cutting every edge into into two halves. Reassembling of the graph $G$ corresponds to the problem of finding
the sequence of edge reconnections $\Theta$ that minimizes two measures that depend on the edge-boundary degrees of assembled components
in the intermediate steps of reassembling $G$. The first step of $\Theta$ initiates with the set of one-vertex components, and
the final step results in the initial graph $G$.
The optimization goal of the graph reassembling can be either minimizing the maximum edge-boundary degree encountered during the
reassembling process, which is called the $\alpha$-measure of the reassembling, or the sum of all edge-boundary
degrees, denoted by $\beta$-measure. 

\noindent
The reassembling sequence $\Theta$ for graph $G$ correspond to a unique binary tree $\B$ (called \emph{binary reassembling tree}), where
the set of leaf nodes is bijectively related to the set of one-vertex components and the root node correspond to the reassembled graph $G$.
Every internal node $w \in V_I(\B)$, as the root of the a subtree $\B_w$, correspond to the vertex induced sub-graph of $G$ comprising
the set of vertices represented by leaf nodes of $\B_w$.

\noindent
Hence, the $\alpha$-optimal reassembling problem for graph $G$ is the problem of finding
a \emph{rooted} binary tree $\B$ and a bijective mapping from vertices of $G$ to leaf node of $\B$,
such that the maximum edge congestion of $\B$ is minimized.
Similarly the $\beta$-optimal reassembling problem is the problem of finding
a \emph{rooted} binary tree $\B$ and a bijective mapping from vertices of $G$ to leaf node of $\B$,
where tree length of $\B$ is minimized. It is easy to see that all the result for the minimum tree layout congestion problem
can be directly inferred for the case where the underlying tree is rooted. On the other hand
the same statement can not immediately be inferred for minimum tree layout length problem.
Therefore in this section we study the Min Tree Length problem where the host graph is a rooted
binary tree (Min Rooted Tree Length problem for short).
\begin{lemma}
\label{lemma:min-tree-length-degree-1}
Min Tree Length problem is NP-hard for the class of graphs $\G_{\nabla=1}$ where for every member $G \in \G_{\nabla=1}$, $G$ is connected and exists $v \in V(G)$ of degree $1$ (\emph{i.e.} $\G_{\nabla=1}$ is a the class of graphs with min degree $\nabla=1$).
\end{lemma}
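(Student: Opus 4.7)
The plan is to reduce from the Min Tree Length problem on the class of simple graphs, whose NP-hardness is established in Theorem~\ref{thr:simple-graph-tree-length-np-hard}. Given any simple graph $G$ from that class (in particular $G$ is connected and every vertex of $G$ has degree at least $2$), we construct $G^*$ by attaching a single new pendant vertex $p$ to an arbitrarily chosen vertex $v \in V(G)$ via the edge $\Set{p,v}$. The resulting graph $G^*$ is simple, connected, and has $p$ of degree exactly $1$, so $G^* \in \G_{\nabla=1}$, and the construction is polynomial in $\size{G}$.

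First I would establish the upper bound. Starting from any layout $T$ and mapping $\varphi$ for $G$, insert a new internal node $v'$ along the tree edge between $\varphi(v)$ and its parent in $T$, and attach the image $\varphi^+(p)$ as a new leaf child of $v'$, producing a layout $T^+$ for $G^*$. A direct distance computation shows that each of the $\deg_G(v)$ edges of $G$ incident to $v$ has its contribution increased by exactly $1$ (because $\varphi^+(v)$ sits one level deeper than $\varphi(v)$), while the new edge $\Set{p,v}$ contributes distance $2$; every other edge contribution is unchanged. Hence $L(T^+,\varphi^+,G^*) = L(T,\varphi,G) + \deg_G(v) + 2$, which yields the upper bound on the optimum over $G^*$.

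Second I would establish the matching lower bound by the converse argument: given any optimal layout $T^*$ for $G^*$, exhibit an equivalent-cost layout $T^{**}$ in which $\varphi^{**}(p)$ is a sibling of $\varphi^{**}(v)$, then contract $\varphi^{**}(p)$ together with its parent to recover a layout $T$ for $G$ with $L(T,\varphi,G) = L(T^{**},\varphi^{**},G^*) - \deg_G(v) - 2$. Combining the two directions gives
\[
\min_{T,\varphi} L(T,\varphi,G^*) \;=\; \min_{T,\varphi} L(T,\varphi,G) + \deg_G(v) + 2,
\]
so NP-hardness transfers from Theorem~\ref{thr:simple-graph-tree-length-np-hard} to $\G_{\nabla=1}$, as $\deg_G(v) + 2$ is computable in polynomial time from the input.

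The main obstacle is the claim in the second step that some optimal layout places $\varphi^*(p)$ as a sibling of $\varphi^*(v)$. Since $p$ has only $v$ as its neighbor in $G^*$, the contribution of $\Set{p,v}$ equals the leaf-to-leaf distance $d_{T^*}(\varphi^*(p),\varphi^*(v))$, which is $2$ when the two images share a parent and at least $4$ otherwise. A naive swap of $\varphi^*(p)$ with the current sibling $s_v$ of $\varphi^*(v)$ saves at least $2$ on this edge but may simultaneously increase contributions of edges of $G$ incident to $s_v$. To make the swap argument work one should choose $v$ to be a high-degree vertex of $G$ (such vertices exist in the construction of Theorem~\ref{thr:simple-graph-tree-length-np-hard}, where the ``original'' graph vertices have degree $\Omega(M)$ with $M = m(2n-2)$, vastly exceeding the degree $2$ of any subdivision vertex): this pins $\varphi^*(v)$ to a well-determined central position in any optimal layout, and a careful case analysis of the possible identities of $s_v$ (which is then necessarily a low-degree subdivision vertex) should show that relocating $s_v$ never outweighs the gain from placing $p$ adjacent to $v$.
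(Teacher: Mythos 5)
Your overall architecture (attach a pendant vertex, prove an additive relation $\min L(G^*)=\min L(G)+C$, transfer hardness) matches the paper's, and your accounting of the insertion cost $\deg_G(v)+2$ is correct. But the step you yourself flag as the main obstacle is resolved in the wrong direction, and this is fatal: if you attach the pendant $p$ to a \emph{high}-degree vertex $v$, the claim that some optimal layout of $G^*$ places $\varphi^*(p)$ as a sibling of $\varphi^*(v)$ is false, not merely hard to prove. Inserting $p$'s parent on a tree edge $e$ increases the total length by exactly $\sigma(e)$ (every $G$-edge crossing $e$ gets one unit longer) plus the dilation of $\Set{p,v}$, which is $2+d$ where $d$ is the distance from the insertion point to $\varphi(v)$. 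At the leaf edge of $\varphi(v)$ this costs $\deg_G(v)+2$; at the leaf edge of a degree-$2$ subdivision vertex $x$ near $\varphi(v)$ it costs only $2+2+d_T(\varphi(x),\varphi(v))$, which is $O(\log n)$ in the (essentially balanced) optimal trees of Corollary~\ref{col:complete-simple-opt-tree}, whereas $\deg_G(v)=\Omega(M)$. So the optimum strictly prefers to park $p$ far from $v$, your claimed identity $\min L(G^*)=\min L(G)+\deg_G(v)+2$ fails, and the additive constant that actually arises depends on the structure of the optimal layout (it no longer decomposes over the choice of $T$), so the reduction does not go through. No ``careful case analysis'' of the sibling $s_v$ can rescue a false statement.

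The paper makes the opposite choice, and that choice is the whole content of the lemma: it observes that the hard instances from Theorems~\ref{thr:multi-graph-tree-length-np-hard} and~\ref{thr:simple-graph-tree-length-np-hard} are \emph{congested} --- every edge of every layout tree has congestion at least the minimum degree $\nabla$ of the graph --- and attaches the pendant to a vertex $v$ of degree exactly $\nabla$. Then the exchange argument you wanted actually closes: detaching $\varphi'(u)$ from wherever it sits saves $\sigma(e_1)+2\ge\nabla+2$ (congestedness), while reattaching it beside $\varphi'(v)$ costs exactly $\deg(v)+2=\nabla+2$, so some optimal layout has the sibling configuration and the additive constant is the fixed, computable quantity $\nabla+2$. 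If you want to salvage your write-up, replace ``high-degree vertex'' by ``minimum-degree vertex,'' and add the congestedness observation (in the instances of Theorem~\ref{thr:simple-graph-tree-length-np-hard}, $\nabla=2$ and the graph is $2$-edge-connected, so every tree-edge congestion is at least $2$); without that lower bound on $\sigma(e_1)$ the exchange argument has nothing to push against.
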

\begin{proof}
 One dimidiate result of the methods that are used in proofs of theorems~\ref{thr:multi-graph-tree-length-np-hard} and~\ref{thr:simple-graph-tree-length-np-hard} is that the Min Tree Length
 problem stays NP-hard even for the class of \emph{congested graphs}. A graph $G$ with minimum vertex degree $\nabla$
 is congested if for every tree layout $T$ for $G$ it is the case $\sigma(e,T,_,G) \ge \nabla$ for every edge $e \in E(T)$.
 \footnote{The graphs that are used in both proof are clearly congested graphs.}

\noindent
Consider congested graph $G$ with min degree $\nabla$. if $\nabla = 1$ we are done, otherwise
let $v \in V(G)$ be a vertex with degree $\nabla$ and $G^{\prime}$ be
the graph constructed by augmenting $G$ with a new vertex $u$ and edge $\Set{u,v}$.
Also let tree layout $T^{\prime}$ and mapping $\varphi^{\prime}$  be a solution for
Min Tree Length problem of $G^{\prime}$ where $\Set{\varphi^{\prime}(v), w}$ is edge incident to leaf node $\varphi^{\prime}(v)$ (with congestion $\nabla$). Is not hard to verify that it must be the case that $\Set{\varphi^{\prime}(u), w} \in E_E(T^{\prime})$.

\noindent
Let $T$ be a layout tree for $G$, obtained from $T^{\prime}$ after removing vertices $\varphi^{\prime}(u)$ and $w$ (and their incident edges)
and introducing edge $\Set{\varphi^{\prime}(v), w^{\prime}}$. Where $w^{\prime}$ is the third neighbor of $w$ in $T^{\prime}$.
Hence the following equality holds.
\begin{align*}
LA(T^{\prime},\varphi^{\prime}, G^{\prime}) = LA(T,\varphi, G) + 2
\end{align*}
\noindent
Where $\varphi(v) = \varphi^{\prime}(v)$ for every $v \in V(G)$.
\begin{claimxxx}
Tree layout $T$ and along side with the bijective mapping $\varphi$ is a solution for Min Tree Length problem of $G$.
\end{claimxxx}
Assume there exist Tree layout $\widetilde{T}$ and bijective mapping $\widetilde{\varphi}$ such that $LA(\widetilde{T},\widetilde{\varphi}, G) < LA(T,\varphi, G)$. Hence one can construct a tree layout $\widetilde{T}^{\prime}$ and bijective mapping $\widetilde{\varphi}^{\prime}$ for $G^{\prime}$ by subdividing the edge incident to leaf node $\widetilde{\varphi}(v)$ and introducing internal node $w$ which is directly connected
to leaf node $\widetilde{\varphi}^{\prime}(u)$. Therefore:
\begin{align*}
LA(\widetilde{T}^{\prime},\widetilde{\varphi}^{\prime}, G^{\prime}) = LA(\widetilde{T},\widetilde{\varphi}, G) + 2 < LA(T^{\prime},\varphi^{\prime}, G^{\prime})
\end{align*}
\noindent
Which contradicts the assumption that $T^{\prime}$ and $\varphi^{\prime}$  are a solution for
Min Tree Length problem of $G^{\prime}$.
\end{proof}
\noindent
Using the result of lemma~\ref{lemma:min-tree-length-degree-1} it can be shown that Min Rooted Tree Length is NP-hard
for the class of non necessary connected graphs. Later on we extend the this result to the class of connected graphs.
\begin{theorem}
\label{lemma:disjoint-component-min-rooted-tree-length}
Consider the finite undirected graph $G$ (non necessarily connected). The problem of finding a bijective mapping from
the vertices of $G$ so some rooted binary tree $T$ with minimum length is not polynomially solvable unless P=NP.
\end{theorem}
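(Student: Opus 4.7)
The plan is to reduce the NP-hard problem Min Tree Length on $\G_{\nabla=1}$ (Lemma~\ref{lemma:min-tree-length-degree-1}) to Min Rooted Tree Length on (non-necessarily connected) graphs. Given $G \in \G_{\nabla=1}$ with a distinguished degree-$1$ vertex $v_0$, I would let $G'$ be the disjoint union of $G$ with a fresh isolated vertex $u$ and argue that the minimum rooted tree length of $G'$ equals the minimum unrooted tree length of $G$ plus one, which makes the construction a polynomial-time reduction establishing NP-hardness.

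The heart of the argument is a structural identity between rooted and unrooted binary trees. Every rooted binary tree $T^*$ on $n$ leaves (root of degree $2$, every other internal node of degree $3$) arises uniquely from an unrooted binary tree $T$ on $n$ leaves by subdividing some edge $e \in E(T)$ and designating the new degree-$2$ node as root. Subdividing $e$ adds exactly one unit to the distance between two leaves whenever their connecting path in $T$ crosses $e$, so for any bijection $\varphi$:
\[
L(T^*, \varphi, G) \;=\; L(T, \varphi, G) \;+\; \delta(e, T, \varphi, G).
\]
The degree-$1$ hypothesis then pins down the minimum edge congestion: for any unrooted layout $(T, \varphi)$, the edge of $T$ incident to $\varphi(v_0)$ has congestion exactly $\deg_G(v_0) = 1$, and since $G$ is connected every other edge of $T$ has congestion at least $1$. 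Hence $\min_{e \in E(T)} \delta(e,T,\varphi,G) = 1$ uniformly over layouts, and combining this with the identity yields that the minimum rooted tree length of $G$ equals the minimum unrooted tree length of $G$ plus one.

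It remains to account for the extra isolated vertex $u$ in $G'$. Because $u$ has no incident edges, it contributes nothing to the length functional; I would argue that the minimum rooted tree length of $G'$ equals the minimum rooted tree length of $G$. In one direction, any rooted binary tree on $n$ leaves for $G$ extends to a rooted binary tree on $n+1$ leaves for $G'$ by adding a new root whose children are $u$ and the old root; this preserves all non-$u$ leaf distances and hence the length. In the other direction, given any rooted tree on $n+1$ leaves with $u$ at some leaf $\ell_u$ and parent $p$, deleting $\ell_u$ and contracting $p$ produces a rooted tree on $n$ leaves for $G$ whose pairwise non-$u$ distances can only decrease, so its length is no greater. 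Chaining the three equalities gives $\mathrm{MinRootedLen}(G') = \mathrm{MinTreeLen}(G) + 1$, as required.

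The main obstacle is combining the subdivision identity with both the connectedness and the degree-$1$ hypothesis on $G$ in order to fix the minimum edge congestion at the uniform value $1$; absent the degree-$1$ vertex the additive term would depend on the layout and the one-to-one correspondence between the two optima would collapse. The isolated-vertex bookkeeping, by comparison, is routine once the rooted-versus-unrooted identity is in hand.
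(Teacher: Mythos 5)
Your proposal is correct and follows essentially the same route as the paper: a reduction from Min Tree Length on $\G_{\nabla=1}$ via an added isolated vertex, where the degree-$1$ vertex guarantees a pendant tree-edge of congestion exactly $1$ (and connectivity guarantees congestion at least $1$ everywhere else), so that the rooted optimum exceeds the unrooted optimum by exactly one. Your root-suppression identity $L(T^*,\varphi,G)=L(T,\varphi,G)+\delta(e,T,\varphi,G)$ is a cleaner way to package what the paper establishes through its structural Claims (I) and (II) about the optimal rooted layout, and it incidentally shows the isolated-vertex gadget is doing only bookkeeping rather than carrying the hardness.
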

\begin{proof}
We proof this theorem using the NP-hardness of Min Tree Length problem for the class of $\G_{\nabla=1}$.
Hence given a graph $G \in \G_{\nabla=1}$, we obtain a disconnected graph $\overline{G} = G \uplus G_0$ where $G_0 = (\Set{v}, \Set{})$ is a one-vertex
graph.
Assume rooted binary tree $\overline{T}$ and bijective mapping $\overline{\varphi}$ are the solution
for the Min Rooted Tree Length problem of the augmented graph $\overline{G}$.
\begin{claimxxx}
\label{claim:disjoint-component-min-rooted-tree-length}
\begin{enumerate}[label=(\Roman*)]
  \item $w_1 = \overline{\varphi} (v)$ is directly connected to the root node $r$ of $\overline{T}$, and
  \item let $w_2$ be the second direct neighbor of the root of $\overline{T}$. Node $w_2$ subdivides an edge (or equivalently, is connected to two edges) with congestion $1$. Figure~\ref{fig:disjoint-rooted-tree-length-degree-1} depicts the claimed structure of $\overline{T}$.
\end{enumerate}
\begin{figure}[h]
        \centering
        \includegraphics[scale=.7]
        {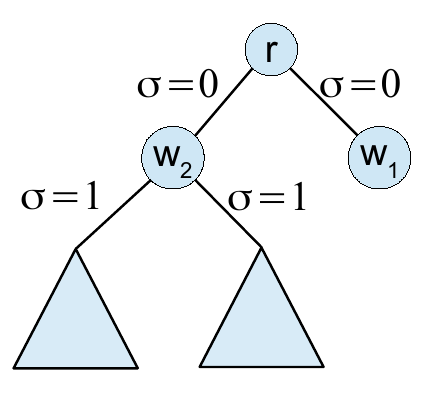}
        \caption{The structure of rooted tree layout for augmented graph $\overline{G}$.}
        \label{fig:disjoint-rooted-tree-length-degree-1}
\end{figure}
\noindent
\end{claimxxx}
\paragraph{Proof of I.}
Assume the opposite holds. Therefore $w_1 = \overline{\varphi} (v)$ is directly connected to an internal node of $w^{\prime} \in V_I( \overline{T})$, where $w^{\prime}$ is incident to two edge $e_1, e_2 \in E(\overline{T})$ such that $\sigma(e_1, \overline{T}, \overline{\varphi}, G) = \sigma(e_2, \overline{T}, \overline{\varphi}, G) > 0$. We construct an alternative rooted tree layout $\overline{T}^{\prime}$ from $\overline{T}$ as described in following.
We remove edge $\Set{w_1, w^{\prime}}$ and $w^{\prime}$ and introducing an new edge, replacing $e_1$ and $e_2$. Also we introduce a new root node $r^{\prime}$ and connect it to $w_1$ and $r$. Based on the construction process of $\overline{T}^{\prime}$, it is the case that $LA(\overline{T}, \overline{G}) - LA(\overline{T}^{\prime}, \overline{G}) = \sigma(e_2, \overline{T}, \overline{\varphi}, \overline{G}) > 0$, which contradicts the assumption of optimality of $\overline{T}$.
\paragraph{Proof of II.}
Clearly exists edge $e \in E(\overline{T})$ such that $\sigma(e, \overline{T}, \overline{\varphi}, \overline{G}) = 1$.
Hence assuming that $w_2$ subdivides an edge with congestion greater that $1$, similar to the approach in the proof of II, one can obtain an alternative rooted tree layout
$\overline{T}^{\prime}$ with the contradictory property $LA(\overline{T}^{\prime}, \overline{G}) < LA(\overline{T}, \overline{G})$.
\begin{claimxxx}
Graph $G \in \G_{\nabla=1}$ has tree length less than $k$, iff the augmented graph $\overline{G}$
has a rooted tree layout with tree length less than $k+1$.
More specifically given an rooted tree layout $\overline{T}$ and bijective mapping $\overline{\varphi}$ for augmented tree $\overline{G}$ (with the structure presented in claim~\ref{claim:disjoint-component-min-rooted-tree-length} and figure~\ref{fig:disjoint-rooted-tree-length-degree-1}),
removing nodes $r$, $w_1$ and $w_2$ and joining the two edges of congestion $1$ incident to $w_2$,
obtains an optimal tree layout $T$ and mapping $\varphi$ for graph $G$ (where $\varphi(u) = \overline{\varphi}(u)$ for every $u \in V(G)$).
\end{claimxxx}
\paragraph{Proof.}
It is easy to see that $LA(\overline{T}, \overline{G}) = LA(T, G) + 1$.
Now Assume there exist a tree layout $T^{\prime}$ and bijective mapping $\varphi^{\prime}$ for $G$
such that $LA(T^{\prime},\varphi^{\prime}, G) < LA(T,\varphi, G)$. On the other hand for every tree layout $T^{\prime}$ for $G$, exists $e \in E(T^{\prime})$ where
$\sigma(e, T^{\prime},_, G) = 1$. From $T^{\prime}$ (and $\varphi^{\prime}$), an alternative rooted tree layout $\overline{T}^{\prime}$ (and bijective mapping $\overline{\varphi}^{\prime}$) can be obtained by introducing three new nodes,
node $r$ designated as the root of $\overline{T}^{\prime}$, leaf node $w_1$, directly connected to $r$ (where $\overline{\varphi}^{\prime}(w_1) = v$)
and internal node $w_2$ directly connected to $r$ which subdivides edge $e$.
It can be verified that $LA(\overline{T}^{\prime},\overline{\varphi}^{\prime}, \overline{G}) = LA(T^{\prime}, \varphi^{\prime}, G) + 1$.
Hence the following contradictory result concludes the proof of this theorem:
\begin{align*}
LA(\overline{T}^{\prime},\overline{\varphi}^{\prime}, \overline{G}) < LA(T,\varphi, G) + 1 = LA(\overline{T}, \overline{\varphi}, \overline{G})
\end{align*}
\end{proof}

\section{Tree Length of Routing Trees}
\label{sect:routing-trees}
  In the previous sections we focused on the problem of embedding vertices of an input graph $G$
into leaf nodes of a host tree $T$, where the degree of every internal node of $T$ is $3$, known as
tree layout problem.
In this section we extend some results to the general routing tree problems.
In this problem the vertices of the source graph $G$ are being embedded into the
leaf nodes of some communication tree $T$ with fixed maximum degree $\Delta$.
\begin{definition}{Minimum Routing Tree Length}
Given graph $G$ and integer $\Delta$, Minimum Routing Tree Length problem (Min Routing Length for short)
is the problem of finding tree $T$ with maximum degree $\Delta$ and a bijective mapping $\varphi:V(G)\rightarrow V_L(T)$,
such that $LA(T,\varphi,G)$ is minimized.
\end{definition}
\noindent
Proof of our final result on Min Routing Length problem is built on some intermediate result as presented in what follows.
\begin{definition}{Fixed Size $k$-Clique Cover} Consider graph $G=(V,E)$ and $k$ positive integers $n_1, \ldots, n_k$
where $\sum_{1 \le i \le k}{n_i} = n$. Fixed Size $k$-Clique Cover problem is the problem of partitioning $V$ into
$k$ disjoint subsets  $V_1, \ldots, V_k$ s.t. $G(V_i)$ is a clique of size $n_i$, for $1 \le i \le k$.
\end{definition}
\begin{lemma}
\label{lemma:fixed-k-clique-cover}
Fixed Size $k$-Clique Cover is NP-complete.
\end{lemma}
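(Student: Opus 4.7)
The plan is to establish NP-completeness in the standard two steps. Containment in NP is straightforward: a certificate is simply the partition $V_1,\ldots,V_k$ itself, and in polynomial time one can verify that the $V_i$ are pairwise disjoint, that their union is $V$, that $|V_i|=n_i$ for each $i$, and that every pair of distinct vertices inside $V_i$ is joined by an edge of $G$, so that each $G(V_i)$ is a clique of the prescribed size $n_i$.

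For NP-hardness, I would reduce directly from Equal Size $4$-Clique Cover, which is NP-complete by Lemma~\ref{lemma:equal-4-clique-cover}. The key observation is that Equal Size $4$-Clique Cover is literally the restriction of Fixed Size $k$-Clique Cover to inputs in which $k=4$ and $n_1=n_2=n_3=n_4=n/4$. Consequently the ``identity'' mapping that sends an instance $G$ of Equal Size $4$-Clique Cover (with $|V(G)|=n$ divisible by $4$) to the instance $(G;\,k=4;\,n/4,n/4,n/4,n/4)$ of Fixed Size $k$-Clique Cover is a polynomial-time reduction: a partition $(V_1,V_2,V_3,V_4)$ witnesses a YES-answer in the source problem if and only if it witnesses a YES-answer in the target problem.

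I expect no real obstacle. The lemma is essentially a trivial generalization statement once Lemma~\ref{lemma:equal-4-clique-cover} is available, and the only minor point to check is that the sequence $n_1,\ldots,n_k$ is encoded efficiently (whether in unary or binary, the identity reduction remains polynomial). If one wished to avoid invoking Lemma~\ref{lemma:equal-4-clique-cover} and give a self-contained argument, a natural alternative would be a padding reduction from \textsc{Clique} or from \textsc{Partition Into Triangles}---for instance, augmenting $G$ with a disjoint union of cliques of prescribed sizes to force the remaining part of the cover into a specific shape---but the one-line reduction above is by far the cleanest and sufficient for the paper's purposes.
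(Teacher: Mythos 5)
Your proof is correct, but it takes a different route from the paper. The paper's own argument (which is only a one-sentence sketch) points to a reduction from graph $k$-colorability: partitioning $V(G)$ into cliques is equivalent to properly coloring the complement graph, and the fixed-size constraint is treated as a ``variation'' of that classical problem. You instead observe that Equal Size $4$-Clique Cover (Lemma~\ref{lemma:equal-4-clique-cover}) is literally the restriction of Fixed Size $k$-Clique Cover to $k=4$ and $n_1=\dots=n_4=n/4$, so the identity map is already a polynomial reduction, and you add the (routine but worth stating) NP-membership check. Your version is tighter and more self-contained relative to the paper's own development, since it leans only on a lemma the paper has already asserted; its one weakness is that it inherits whatever doubt attaches to Lemma~\ref{lemma:equal-4-clique-cover}, whose proof the paper delegates to an (empty) citation. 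The paper's colorability route would, if spelled out, be independent of that lemma, but as written it is only a gesture; either approach is acceptable, and yours is the cleaner one to actually write down.
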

\begin{proof}
Similar to the proof of NP-completeness of Fixed Size $k$-Clique Cover problem as a variation of graph $k$-colorability problem.
\end{proof}
\begin{definition}{Equal Size $k$-Clique Cover} Given graph $G=(V,E)$ where $|V| = k\times l$ for some $l \in \mathbb{N}$, Equal Size $k$-Clique Cover problem is the problem of partitioning $V$ into
$k$ disjoint subsets  $V_1, \ldots, V_k$ s.t. $G(V_i)$ is a clique of size $\dfrac{n}{k}$, for $1 \le i \le k$ .
\end{definition}
\begin{lemma}
\label{lemma:equal-k-clique-cover}
Equal Size $k$-Clique Cover problem is NP-complete.
\end{lemma}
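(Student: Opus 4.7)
The plan is to establish NP-completeness by a polynomial reduction from Fixed Size $k$-Clique Cover, which is NP-complete by Lemma~\ref{lemma:fixed-k-clique-cover}. Membership of Equal Size $k$-Clique Cover in NP is immediate, since a candidate partition $V_1, \ldots, V_k$ can be verified in polynomial time by checking that each class has size $n/k$ and induces a clique in $G$.

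Given an instance $(G, n_1, \ldots, n_k)$ of Fixed Size $k$-Clique Cover with $n = \sum_i n_i$, I set $N = (\max_i n_i) + 1$, ensuring $N - n_i \ge 1$ for every $i$. I then construct $G'$ by augmenting $G$ with $k$ pairwise disjoint \emph{anchor cliques} $A_1, \ldots, A_k$ of sizes $|A_i| = N - n_i$, as follows: each $A_i$ is fully interconnected, every vertex of $A_i$ is made adjacent to every vertex of $V(G)$, but no edges are added between $A_i$ and $A_j$ for $i \neq j$. The total vertex count is $n + \sum_i (N - n_i) = kN$, so any Equal Size cover of $G'$ must consist of $k$ cliques of size exactly $N$. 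For the forward direction, a Fixed Size cover $V_1, \ldots, V_k$ of $G$ immediately yields the Equal Size cover $V_i \cup A_i$ of $G'$, since $V_i$ is a clique in $G$, $A_i$ is a clique, and all cross-edges exist by construction.

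For the backward direction, suppose $U_1, \ldots, U_k$ is an Equal Size cover of $G'$. Since any two vertices drawn from distinct anchors $A_j, A_{j'}$ are non-adjacent in $G'$, each clique $U_i$ can contain vertices from at most one anchor; because every anchor is nonempty while there are exactly $k$ anchors and $k$ classes, each $U_i$ contains one full $A_{\pi(i)}$ for some permutation $\pi$. Relabeling so that $A_i \subseteq U_i$, the residual $V_i = U_i \setminus A_i$ consists of exactly $n_i$ vertices of $V(G)$ inducing a clique in $G$, so $V_1, \ldots, V_k$ is a Fixed Size cover. The main subtlety is the calibration of the anchors: the strict inequality $|A_i| \ge 1$ is what forces the $A_j$'s into distinct partition classes, the missing $A_i$-$A_j$ edges are what prevents two anchors from being absorbed into one class, and the saturated $A_i$-to-$V(G)$ edges are what makes the forward direction preserve cliqueness. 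Beyond these design choices the argument is routine, and I do not foresee a substantive obstacle.
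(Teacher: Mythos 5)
Your proof is correct, and it is worth noting that the paper itself does not actually supply one: its ``proof'' of this lemma reads ``Similar to the proof of NP-completeness of Equal Size $k$-Clique Cover problem,'' which is circular (presumably a garbled reference to the equally unproved Lemma~\ref{lemma:equal-4-clique-cover}). Your reduction from Fixed Size $k$-Clique Cover via anchor cliques is a clean way to fill that hole, and it is consistent with the paper's lemma ordering, which places Lemma~\ref{lemma:fixed-k-clique-cover} immediately before this one precisely so it can serve as the reduction source. The construction is sound: the count $n+\sum_i(N-n_i)=kN$ forces classes of size exactly $N$, the forward direction is immediate, and in the backward direction the key step --- that each class absorbs exactly one whole anchor --- follows from the pigeonhole argument you sketch (each nonempty anchor must meet at least one class, distinct anchors cannot meet the same class since their vertices are non-adjacent, and there are only $k$ classes, so no anchor can be split across two classes). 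I would only suggest making that last point explicit, since a priori an anchor could be scattered over several classes, and it is the disjointness-plus-counting argument that rules this out. The one caveat is that your proof inherits whatever burden attaches to Lemma~\ref{lemma:fixed-k-clique-cover}, which the paper also leaves unproved (``similar to \ldots\ graph $k$-colorability''); if you want a self-contained argument you would reduce Equal Size $k$-Clique Cover directly from Partition into $k$ Cliques (equivalently $k$-colorability of the complement, NP-complete for $k\ge 3$), padding with universal vertices to equalize the class sizes --- essentially the same anchor trick applied one level down.
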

\begin{proof}
Similar to the proof of NP-completeness of Equal Size $k$-Clique Cover problem.
\end{proof}
\noindent
Consider the class of graphs where for every graph $G$ in this class, $\exists l \in \mathbb{N}$ such that $|V(G)| = k\times (k-1)^{l}$.
Equal Size $k$-Clique Cover problem stays NP-complete for this class. Concisely in what follows,
a polynomial reduction from Equal Size $k$-Clique Cover problem for general graphs is presented.
Given graph $G=(V,E)$ where $|V| = k\times n^{\prime}$ for $n^{\prime} \in \mathbb{N}$, one can obtain graph $G^{\prime}$ by augmenting
$G$ with $k$ complete components $C_1,\ldots,C_k$ of size $(k-1)^l - n^{\prime}$, where $l$ is the smallest integer such that $(k-1)^{l} \ge n^{\prime}$. Also in $G^{\prime}$ every newly introduces vertex $v \in V(C_1) \uplus \ldots \uplus V(C_k)$ is connected to
every vertex $u \in V(G)$. It is not hard to check that $|V(G^{\prime})| = k\times (k-1)^{l}$ and more importantly
$G$ is a positive instance of Equal Size $k$-Clique Cover problem iff $G^{\prime}$ is a positive instance of Equal Size $k$-Clique Cover problem.

\noindent
In the rest of this section we only consider graph of order $|V(G)| = k\times (k-1)^{l}$. Obviously all the harness results
for this class immediately extend to the class of general sized graphs.
\begin{theorem}
\label{thr:multi-graph-routing-length-np-hard}
Given multi-graph $G$ and integer $\Delta$, the problem of finding a routing tree $T$
and bijective mapping $\varphi: V(G) \rightarrow V_L(T)$ with minimum tree length is NP-hard.
\end{theorem}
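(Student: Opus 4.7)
The plan is to lift the reduction in Theorem~\ref{thr:multi-graph-tree-length-np-hard} from the $\Delta = 3$ case to arbitrary $\Delta \ge 3$, replacing Equal Size $4$-Clique Cover with Equal Size $\Delta$-Clique Cover (Lemma~\ref{lemma:equal-k-clique-cover}) and invoking the general-$\Delta$ extensions of Theorem~\ref{thr:opt-LL-tree} and Corollary~\ref{coral:opt-tree-layout-complete-graph}, noted in the text to hold for every maximum internal degree $\Delta \ge 3$. Given an input $G$ to Equal Size $\Delta$-Clique Cover with $|V(G)| = n = \Delta \cdot (\Delta-1)^{l}$, I would build the multi-graph $G^{\prime}$ by placing $M = m \cdot (2n-2)$ parallel edges between every distinct pair $u, v \in V(G)$, where $m = |E(G)|$. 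Writing $G^{\prime} = G \uplus \widetilde{G}$ with $\widetilde{G}$ the induced complete multi-graph, for every routing tree $T$ and bijection $\varphi$ we have
\[
L(T, \varphi, G^{\prime}) = L(T, \varphi, G) + L(T, \varphi, \widetilde{G}).
\]

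First I would pin down the host tree. Any routing tree with $n$ leaves and max internal degree $\Delta$ has at most $2n-3$ edges, each of $G$-congestion at most $m$, so $L(T, \varphi, G) < M$. Since $\widetilde{G}$ has the same multiplicity $M$ on every pair, $L(T, \varphi, \widetilde{G}) = M \cdot \sigma_{LL}(T)$, and by the $\Delta$-analogue of Corollary~\ref{coral:opt-tree-layout-complete-graph} this is minimized precisely by $T \in \mathfrak{T}(\Delta, \Delta)$, with a strict integer gap of at least $M$ in the $\widetilde{G}$-term for any tree shape outside that family. Hence every optimal routing tree for $G^{\prime}$ must lie in $\mathfrak{T}(\Delta, \Delta)$. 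Such a $T$ has a unique central node of degree $\Delta$ whose removal splits $V_L(T)$ into $\Delta$ equal-size blocks $L_1, \ldots, L_{\Delta}$ of cardinality $(\Delta-1)^{l}$ each, and from the planar line embedding in Definition~\ref{def:opt-family-structure} any two leaves in the same block are at distance at most $2l$ while any two leaves in distinct blocks are at distance exactly $2(l+1)$. With $T$ fixed, minimizing $L(T, \varphi, G^{\prime})$ over $\varphi$ reduces to minimizing $L(T, \varphi, G) = L(T, \varphi, K_n) - L(T, \varphi, \overline{G})$; since $L(T, \varphi, K_n)$ does not depend on $\varphi$, this is the same as maximizing
\[
L(T, \varphi, \overline{G}) = \sum_{\{u,v\} \in E(\overline{G})} d_T(\varphi(u), \varphi(v)).
\]
Each summand is bounded above by $2(l+1)$, and the bound $2(l+1) \cdot |E(\overline{G})|$ is attained iff every $\overline{G}$-edge crosses between distinct blocks, i.e., iff each $\varphi^{-1}(L_i)$ induces a clique of size $n/\Delta$ in $G$. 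Thus $G$ admits an Equal Size $\Delta$-Clique Cover iff Min Routing Length of $(G^{\prime}, \Delta)$ meets the combinatorially-fixed threshold, and the construction is polynomial in $n$, $m$, and $\Delta$.

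The principal obstacle is not the reduction itself but the external ingredient it consumes: one must rigorously establish the $\Delta \ge 3$ extensions of Theorem~\ref{thr:opt-LL-tree} and Corollary~\ref{coral:opt-tree-layout-complete-graph}, with the crucial additional property that the gap in $\sigma_{LL}$ between the $\mathfrak{T}(\Delta, \Delta)$ optimum and any competing routing-tree shape is at least one unit. Only then does the factor-$M$ inflation inside the $\widetilde{G}$-term dominate the at-most-$(M-1)$ contribution from $G$ and lock the host tree into $\mathfrak{T}(\Delta, \Delta)$. The strict intra-block versus inter-block distance ordering in $\mathfrak{T}(\Delta, \Delta)$ is a secondary item to record, but it is immediate from the levelled planar embedding.
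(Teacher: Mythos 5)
Your proposal follows essentially the same route as the paper's own proof: a reduction from Equal Size $\Delta$-Clique Cover on graphs of order $\Delta(\Delta-1)^l$, inflating $G$ with $M=m(2n-2)$ parallel edges so that the complete multi-graph part forces the host tree into $\mathfrak{T}(\Delta,\Delta)$ via the $\Delta$-analogue of Corollary~\ref{coral:opt-tree-layout-complete-graph}, and then reading off the clique cover from the $\Delta$ equal blocks around the central node. Your write-up is in fact somewhat more explicit than the paper's (the complement/threshold argument and the caveat about the unproven-in-detail $\Delta\ge 3$ generalization of Theorem~\ref{thr:opt-LL-tree}), but the underlying argument is the same.
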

\begin{proof}
Similar to the proof of theorem~\ref{thr:multi-graph-tree-length-np-hard}, it can be shown that the
Equal Size $k$-Clique Cover problem is not harder than Minimum Routing Tree Length problem.

\noindent
Hence, consider graph $G$ as the input of the Equal Size $k$-Clique Cover problem where 
$|V(G)| = k\times (k-1)^{l}$ for some $l \in \mathbb{N}$.
Let $G^{\prime}$ be the multi-graph obtained from $G$ by introducing $M = m\times(2n-2)$ parallel edges
between every two vertices $u, v \in V(G)$.
Therefore $G^{\prime} = G \uplus \widetilde{G}$, where complete multi-graph $\widetilde{G}$,
is a vertex induced subgraph of $G^{\prime}$.
Using similar reasoning as in the proof of theorem~\ref{thr:multi-graph-tree-length-np-hard},
$\widetilde{G}$ dictates the structure of optimal routing tree for $G^{\prime}$.
In other words, the problem of finding optimal routing tree $T$ and mapping $\varphi$ for $G^{\prime}$
reduces to the problem of finding mapping $\varphi$ from vertices of original graph $G$
to the leaf nodes of a fixed-structure tree $T$ such that $LA(T, \varphi, G)$ is minimized.
Based on corollary~\ref{coral:opt-tree-layout-complete-graph} $T \in \mathfrak{T}(k, k)$.

\noindent
Removing the only central node of $T \in \mathfrak{T}(k, k)$ partitions $T$
into $k$ subtrees $\mathcal{T}_1, \mathcal{T}_2, \ldots, \mathcal{T}_k \in \mathfrak{T}(k, k)$ of the same order and $(k-1)^l$
leaf nodes.
As explained with more details in the proof of theorem~\ref{thr:multi-graph-tree-length-np-hard},
it can be inferred that $G$ is a positive instance of the Equal Size $k$-Clique Cover problem (in other words
vertices of $G$ can be partitioned into $k$ equal sized complete sub-graphs $G_1,G_2,\ldots,G_k$)
iff given routing tree $T$ and mapping $\varphi$ as the solution for Min Routing Length problem (with $\Delta = k$),
$\varphi$ maps vertices of $G_i$ to leaf nodes of $\mathcal{T}_i$ for $1 \le i \le k$.

\end{proof}

\Hide
{\footnotesize
\printbibliography
}

{\footnotesize 
\bibliographystyle{plainurl} 
\bibliography{allbibs}
}

\ifTR
\else
\fi

\end{document}